\def\usetikz{0} % 1 to use tikz, 0 to load figs from .pdf
	\pgfplotsset{compat=1.14}
	\tikzset{terminal/.style=[node distance=1.25cm,
		fblock/.style={rectangle,minimum size=1.0cm,
		thick,draw=black},
		sum/.style={circle,minimum size=0.5cm,
		thick,draw=black},
		every new ->/.style={-latex},
		every new --/.style={},
		hvpath/.style={to path={-| (\tikztotarget)}},
		vhpath/.style={to path={|- (\tikztotarget)}}
	}
\theoremstyle{definition}
\newtheorem{cond}{Condition}%[thm]
\newcommand{\Iapp}{i_{\text{app}}}
\newcommand{\Iint}{g}
\newcommand{\Inoi}{e}
\newcommand{\w}{w}
\newcommand{\vref}{r}
\newcommand{\samplingT}{t_s}
\newcommand{\inpbound}{\beta}
\newcommand{\gain}{\gamma}
\newcommand{\gmax}{\bar{g}}
\newcommand{\npars}{{n_\theta}}
\newcommand{\nstates}{{n_x}}
\newcommand{\ngat}{{n_\w}}
\newcommand{\ninputs}{{n_{u}}}
\newcommand{\noutputs}{{n_{y}}}
\newcommand{\nref}{{n_{\vref}}}
\newcommand{\nchannels}{{n_{\text{c}}}}
\newcommand{\nmodel}{{n_{\text{m}}}}
\newcommand{\setreal}{\mathbb{R}}
\newcommand{\setint}{\mathbb{Z}}
\newcommand{\setnat}{\mathbb{N}}
\newcommand{\inpclass}{\mathcal{U}_\inpbound}
\newcommand{\pardomain}{\mathcal{D}}
\begin{document}

\begin{frontmatter}
%\runtitle{Insert a suggested running title}  
% Running title for regular papers but only if the title  
% is over 5 words. Running title is not shown in output.

% Title, preferably not more than 10 words.
\title{Feedback Identification of conductance-based models\thanksref{footnoteinfo}} 

\thanks[footnoteinfo]{This paper was not presented at 
any IFAC meeting. Corresponding author T.~B.~Burghi. }

\author[cambridge]{Thiago B. Burghi}\ead{tbb29@cam.ac.uk},    	
\author[eindhoven]{Maarten Schoukens}\ead{m.schoukens@tue.nl},  
\author[cambridge]{Rodolphe Sepulchre}
					\ead{r.sepulchre@eng.cam.ac.uk} 

\address[cambridge]{Department of Engineering, Control Group, 
University of Cambridge, Cambridge CB2 1PZ, UK.}                                             
\address[eindhoven]{Department of Electrical Engineering,  
Eindhoven University of Technology, 5612 AZ Eindhoven,
Netherlands.}            

% Five to ten keywords, chosen from the IFAC keyword list or with the      
% help of the Automatica keyword wizard           
\begin{keyword}      	   
Nonlinear system identification;
Closed-loop identification;
Prediction error methods;
Contraction analysis;
Neuronal models.	  	 
\end{keyword}            

\begin{abstract} % Abstract of not more than 200 words.

This paper applies the classical prediction error
method (PEM) to the estimation of nonlinear discrete-time
models of neuronal systems subject to input-additive noise. 
While the nonlinear system exhibits excitability, bifurcations,
and limit-cycle oscillations, we prove consistency of the
parameter estimation procedure under output feedback. 
Hence, this paper provides a rigorous framework 
for the application of conventional nonlinear system
identification methods to discrete-time stochastic neuronal 
systems. The main result exploits the elementary property that
conductance-based models of neurons have an exponentially
contracting inverse dynamics. This property is implied by
the voltage-clamp experiment, which has been the
fundamental modeling experiment of neurons ever since the
pioneering work of Hodgkin and Huxley. 

\end{abstract}

\end{frontmatter}		

\section{Introduction}

The estimation of models for biological neuronal systems 
is a topic that has attracted considerable interest in the 
scientific community over the past decades 
\cite{nogaret_automatic_2016,geit_automated_2008,huys_efficient_2006,lepora_efficient_2012,milescu_maximum_2005}.
However, the asymptotic properties of published estimation
methods are rarely discussed. This is understandable for
models that exhibit highly nonlinear dynamics including
excitable behaviors and limit cycle oscillations.

The goal of this paper is to show that rigorous convergence
results can be established in the most classical framework 
of the prediction error method (PEM) 
\cite{ljung_convergence_1978,ljung_system_1999}.
In nonlinear system identification, the convergence 
and consistency analysis of the PEM depends on the
assumption that the signals are generated within a
process with some form of input-output stability --- 
for instance, a fading memory \cite{boyd_fading_1985}, 
input-output exponential stability
\cite{ljung_convergence_1978,novara_parametric_2011,abdalmoaty_linear_2019},
or mean square convergence of the output to that of a Volterra
series \cite{paduart_identification_2010,schoukens_identification_2017}.
In addition, the analysis is greatly simplified by the assumption
that the true system is affected by output-additive noise only
\cite{schoukens_identification_2017,ljung_perspectives_2010}. 

Neuronal models are nonlinear systems that fail to satisfy the
stability and the output-additive noise assumptions. First,
neuronal systems are primarily subject to \textit{input}-additive
noise. This type of noise models the stochastic fluctuations of
currents traversing the neuronal membrane. For a review of the
modeling of noise in neuronal systems, see
\cite{goldwyn_what_2011,gerstner_neuronal_2014}. Furthermore, the
non-equilibrium nature of neuronal behaviors precludes any
reasonable exponential stability or fading memory assumption.

Previous works have studied the application of the PEM under
these unfavorable conditions. When the noise is input-additive,
the difficulty lies in the intractability of analytically 
computing the optimal one-step-ahead predictor (see 
\cite{schon_sequential_2015} for a discussion). As long as the
data-generating process is input-output exponentially stable,
consistent parameter estimates can be obtained in some cases,
e.g., when predictor models are linear in the past outputs 
\cite{abdalmoaty_linear_2019} or when LTI elements of a 
block-oriented model structure are known
\cite{novara_parametric_2011}. When input-output stability is
not guaranteed, as in the case of oscillatory systems, an
alternative to standard PEM analysis must be found. In 
\cite{casas_prediction_2002}, the authors justify with dynamical
systems theory the application of the PEM to identify the linear
element of a Lure-type system with a limit cycle; the authors
assume ergodicity of the system's signals in order to bypass the 
question of stability. In 
\cite{manchester_identification_2011-1}, the authors develop a
method based on transverse contraction analysis to identify 
oscillatory systems under the assumption that all states of
the model are available; no noise considerations are made.

The main observation underlying the present paper is
that while the assumptions that make PEM analysis tractable are
not verified for conductance-based neuronal models, they hold for
their inverse. In other words, conductance-based models verify
these assumptions under high-gain output feedback. This means
that neuronal systems can be identified with classical techniques
by relying on the direct approach of closed-loop system
identification \cite{forssell_closed-loop_1999}. Using
contraction theory \cite{lohmiller_contraction_1998}, we
rigorously justify the use of the direct approach to consistently
estimate discrete-time neuronal models.

We show that the closed-loop approach to the neuronal
system identification problem is fully consistent with the
classical voltage-clamp experiment of Hodgkin and Huxley
\cite{hodgkin_quantitative_1952}. Voltage-clamp
has remained to date the key experimental methodology to
derive a state-space model of a neuron. 
We show that there is flexibility in designing a
contracting output feedback law beyond the high-gain
implementation of voltage-clamp. As in previous work
dealing with Lure systems \cite{burghi_feedback_2019}, we
advocate that feedback design is an integral element of
neuronal system identification, which makes this an
attractive application of closed-loop system
identification theory.

The paper is organized as follows: in Section 
\ref{sec:preliminaries}, we review a number of classical
tools of nonlinear system identification and analysis. In
Section \ref{sec:models}, we introduce the general class of
conductance-based models and show that they have
a contracting inverse. In Section \ref{sec:identification}, we
detail the identification of the inverse dynamics of 
discrete-time neuronal systems with the PEM and discuss the
plausibility of the required assumptions. In Section
\ref{sec:examples}, we illustrate our results using data from
numerical simulations.

\section{Preliminaries}
\label{sec:preliminaries} 

This section reviews two classical results of system
theory: the convergence properties of the prediction
error method \cite{ljung_convergence_1978}, and the system
property of contraction \cite{lohmiller_contraction_1998}.

We use the following notation: For a discrete-time
variable $x_k$, the signal up to time $k$ is denoted by
$x_{[0,k]} = (x_k,x_{k-1},\dotsc,x_0)$. We write
$\setreal_+ = [0,\infty)$, $\setnat = \{1,2,\dotsc\}$, 
and $\setint_+=~\{0,1,\dotsc\}$.
The number $0$ is treated as a scalar or as a vector, with the
dimension implied by the context in which it is used. The
norm $\|\cdot\|$ denotes the Euclidean norm, and 
$\sigma_{\max}[\;\cdot\;]$ denotes the largest singular 
value of a matrix. 
For arbitrary $\inpbound>0$, the class 
of $\ninputs$-valued sequences $u:\setint_+ \to 
\setreal^\ninputs$ such that 
$\sup_{k\in\setint_+} \max_j |u_{j,k}| < \inpbound$
is denoted by $\inpclass^\ninputs$.

\subsection{Parametric Identification of nonlinear
systems with the Prediction Error Method}
\label{sec:PEM} 

Consider a nonlinear stochastic discrete-time system 
represented by
\begin{equation}
	\label{eq:innovations_form}
	%y_k = F_k\left(y_{[0,k-1]},u_{[0,k-1]}\right) + e_k,
	y_k = F_k\left(u_{[0,k]};x_0\right) + e_k
\end{equation}
where $u_k \in \setreal^{\ninputs}$ is the
system's input, $y_k \in \setreal^{\noutputs}$ is the
system's output, $e_k\in \setreal^{\noutputs}$ is a
stochastic process such that $E[e_k\;|\;e_{[0,k-1]}]=0$,
$F_k(\cdot)$ is a sequence of deterministic mappings, and
$x_0$ is an initial state.

Assume that the system \eqref{eq:innovations_form} is in a 
feedback loop with an adaptive feedback element given by
\begin{equation}
	\label{eq:feedback_block}
	%u_k = H_k\left(y_{[0,k]},u_{[0,k-1]},\vref_k\right) 
	u_k = H_k\left(y_{[0,k-1]},u_{[0,k-1]},\vref_k\right)
\end{equation}
where $u_k$ is the feedback element's output, $y_k$ is the
output of \eqref{eq:innovations_form}, and $\vref_k \in 
\setreal^\nref$ is an external signal. 

In the prediction error framework, the system 
\eqref{eq:innovations_form} is identified based on $N$
collected input-output data points, given by the
sequences $y_{[0,N]}$ and $u_{[0,N]}$. For this purpose, a
parametric model is used to obtain a prediction 
$\hat{y}_k$ of the output $y_k$. In this paper, we work
with an \textit{output error} predictor model, which is 
represented by a sequence of operators $\hat{F}_k$ such that
\begin{equation}
	\label{eq:model}
	\hat{y}_k(\theta) = \hat{F}_k
	\left(u_{[0,k]};\theta\right) 
	%\left(y_{[0,k-1]},u_{[0,k-1]};\theta\right) 
\end{equation}
where $\theta \in \pardomain$ denotes a vector of
parameters, and $\pardomain$ is a subset of 
$\setreal^\npars$, with $\npars$ the number of 
parameters\footnote{In
\eqref{eq:innovations_form} and \eqref{eq:model}, we allow
the input to affect the output without a delay. This
differs from the text in \cite{ljung_convergence_1978},
where a time delay is assumed. However, as remarked in
\cite{ljung_convergence_1978}, this delay is not essential
for their results. To make clear that algebraic loops are
not allowed in the system, we included an explicit time
delay in the subsystem \eqref{eq:feedback_block}.}.
The assumption on the process $e_k$ implies that 
\eqref{eq:model} is the optimal mean squared error predictor of
$y_k$, given $y_{0:k-1}$ and $u_{0:k}$.

A simple criterion that can be used to obtain estimates
for the parameters in the vector $\theta$ is the minimization 
of the cost function
\begin{equation}
	\label{eq:cost_function} 
	V_N(\theta) = \frac{1}{N}\sum_{k=1}^N 
				\|y_k - \hat{y}_k(\theta)\|^2,
\end{equation}
resulting in the parameter estimates
\begin{equation}
	\label{eq:theta_est} 
	\hat{\theta}_N = \text{arg}\min_{\theta\in \pardomain} V_N(\theta) 
\end{equation}

The asymptotic behavior of the parameter
estimates $\hat{\theta}_N$ as the number of data points $N$
grows to infinity depends on the asymptotic behavior of the
function $V_N(\theta)$. Since the system is stochastic,
$V_N(\theta)$ is a random variable. To guarantee that the
identified model is independent of the specific realization
of the noise entering the system, we need the prediction
error $\varepsilon_k(\theta)=y_k - 
\hat{y}_k(\theta)$ to satisfy an ergodicity property:
$V_N(\theta)$ must converge to its expected value as 
$N\to\infty$. This property is achieved by means of two
fundamental conditions: one on the system that generates
the data, and one on the predictor.

\begin{cond}[\cite{ljung_convergence_1978},\cite{forssell_closed-loop_1999}]
	\label{cond:S3}
	The closed-loop system 
	\eqref{eq:innovations_form}-\eqref{eq:feedback_block}
	is such that for each $k,s \in \setint_+$, $k \ge s$,
	there exist random variables $\bar{y}_{k,s}$ and 
	$\bar{u}_{k,s}$, independent of $\vref_{[0,s]}$ and
	$e_{[0,s]}$ but not independent of $\vref_{[0,k]}$ 
	and $e_{[0,k]}$, such that
	\begin{subequations}
		\label{eq:condition_S3} 
		\begin{align}
			\label{eq:condition_S3_A}
			E\left[\|y_k - \bar{y}_{k,s} \|^4\right] 
			&< C \alpha^{k-s}
			 \\
			\label{eq:condition_S3_B}
			E\left[\|u_k - \bar{u}_{k,s} \|^4\right] 
			&< C \alpha^{k-s}
		\end{align}
	\end{subequations}
	for some $C>0$ and $\alpha < 1$. Here,
	$\bar{y}_{s,s} = \bar{u}_{s,s} = 0$.
\end{cond}
%Remark: without feedback, if we used u_k = v_k, it would fail
%to satisfy condition S3_B (because the initial might not
%be forgotten)

\begin{cond}[\cite{ljung_convergence_1978}]
	\label{cond:M1} 
	The mappings 
	%$\hat{F}_k\left(y_{[0,k-1]},u_{[0,k-1]};\theta\right)$ 
	$\hat{F}_k$
	are differentiable with respect to $\theta$ for
	all $\theta \in \pardomain$, where $\pardomain$
	is a closed and bounded subset of $\setreal^\npars$. 
	Furthermore, there exist a $C<\infty$ and $\alpha \in (0,1)$
	such that
	\begin{equation}
		\label{eq:condition_M1_A} 
		\begin{split}
%		&\left|\hat{F}_k\left(y_{[0,k-1]},u_{[0,k-1]};
%			\theta \right)-
%		\hat{F}_k\left(\tilde{y}_{[0,k-1]},\tilde{u}_{[0,k-1]};
%			\theta\right)\right| \\
%			&\le  C\sum_{m=0}^{k} \alpha^{k-m} 
%			\left(
%			|u_m - \tilde{u}_m| + |y_m - \tilde{y}_m|
%			\right)
		\big\|\hat{F}_k\left(u_{[0,k]};\theta \right)-&
		\hat{F}_k\left(\tilde{u}_{[0,k]};\theta\right)\big\| \\
			&\le  C\sum_{m=0}^{k} \alpha^{k-m} \;
			\|u_m - \tilde{u}_m\|
		\end{split}
	\end{equation}
	and
	\begin{equation}
		\label{eq:condition_M1_B}
%		\left| \hat{F}_k\left(0_{[0,k-1]},0_{[0,k-1]};
%			\theta \right) \right| \le C 
		\big\| \hat{F}_k\left(0_{[0,k]};
			\theta \right) \big\| \le C 
	\end{equation}
	for all $k$, $u_{[0,k]}$, $\tilde{u}_{[0,k]}$, and 
	$\theta$ belongs to an open neighborhood of 
	$\pardomain$. The $(d/d\theta)\hat{F}_k$ are
	subject to an inequality analogous to
	\eqref{eq:condition_M1_A}.
\end{cond}
When the model \eqref{eq:model} satisfies Condition
\ref{cond:M1}, then the mapping $\theta \mapsto 
\{\hat{F}_k(\;\cdot\; ;\theta)\}_{k\in\setint_+}$ is
called a model structure 
\cite[Section 5.7]{ljung_system_1999}. 
Thus \eqref{eq:model} is called a model structure when
viewed as a function of $\theta$.

The main result of \cite{ljung_convergence_1978} can now be
stated as follows.

\begin{lem}[\cite{ljung_convergence_1978}]
	\label{lem:convergence} 
	Consider the feedback system 
	\eqref{eq:innovations_form}-\eqref{eq:feedback_block}
	subject to Condition \ref{cond:S3}, and the model 
	\eqref{eq:model} subject to Condition \ref{cond:M1}.
	Consider $V_N(\theta)$ given by \eqref{eq:cost_function}.
	Then
	\[
		\sup_{\theta\in \pardomain} 
			\left|
				V_N(\theta) - E\left[V_N(\theta)\right]
			\right| \to 0 \quad\quad \text{ w.p. } 1 
			\text{ as } N\to \infty
	\]
\end{lem}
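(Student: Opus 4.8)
The plan is to split the proof into a pointwise strong law of large numbers for each fixed $\theta$ and a compactness argument that makes the convergence uniform over $\pardomain$. Throughout write $\varepsilon_k(\theta) = y_k - \hat{F}_k(u_{[0,k]};\theta)$, $f_k(\theta) = \|\varepsilon_k(\theta)\|^2$, and $\|X\|_{L^p} := (E\|X\|^p)^{1/p}$, so that $V_N(\theta) = \tfrac1N\sum_{k=1}^N f_k(\theta)$. The first thing I would record is that both conditions supply uniform moment bounds: setting $k=s$ in \eqref{eq:condition_S3} and using $\bar y_{s,s}=\bar u_{s,s}=0$ gives $\|y_k\|_{L^4},\|u_k\|_{L^4}\le C$ uniformly in $k$, and then \eqref{eq:condition_M1_A}--\eqref{eq:condition_M1_B} with Minkowski's inequality give $\|\hat F_k(u_{[0,k]};\theta)\|_{L^4}\le C + C\sum_{m=0}^k \alpha^{k-m}\|u_m\|_{L^4}$, which is bounded uniformly in $k$ and $\theta$ because the weights are summable. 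Hence $\varepsilon_k(\theta)$ and $f_k(\theta)$ have uniformly bounded fourth and second moments.

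The core construction, which I expect to be the main obstacle, is a ``restarted'' prediction error that forgets the data before a time $s$. For $s\le k$, set $v_m = \bar u_{m,s}$ for $m\ge s$ and $v_m=0$ for $m<s$, and define $\bar\varepsilon_{k,s}(\theta) = \bar y_{k,s} - \hat F_k(v_{[0,k]};\theta)$ and $\bar f_{k,s}(\theta)=\|\bar\varepsilon_{k,s}(\theta)\|^2$, using the variables $\bar y_{k,s},\bar u_{m,s}$ of Condition \ref{cond:S3}. By construction $\bar\varepsilon_{k,s}(\theta)$ is independent of $e_{[0,s]}$ and $\vref_{[0,s]}$. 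The point is that it approximates $\varepsilon_k(\theta)$ exponentially: from $\varepsilon_k-\bar\varepsilon_{k,s} = (y_k-\bar y_{k,s}) - (\hat F_k(u_{[0,k]};\theta)-\hat F_k(v_{[0,k]};\theta))$ I bound the first difference by \eqref{eq:condition_S3_A} and the second by the fading-memory inequality \eqref{eq:condition_M1_A}, splitting the convolution at $m=s$. The block $m\le s$ contributes $\sum_{m=0}^s\alpha^{k-m}\|u_m\|_{L^4}=O(\alpha^{k-s})$, and the block $m>s$ contributes $\sum_{m=s+1}^k\alpha^{k-m}\|u_m-\bar u_{m,s}\|_{L^4}=O(\alpha^{(k-s)/4})$ after inserting \eqref{eq:condition_S3_B}. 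With $\lambda:=\alpha^{1/4}<1$ this yields $\|\varepsilon_k(\theta)-\bar\varepsilon_{k,s}(\theta)\|_{L^4}\le C\lambda^{k-s}$; since $f_k-\bar f_{k,s}=(\varepsilon_k-\bar\varepsilon_{k,s})^\top(\varepsilon_k+\bar\varepsilon_{k,s})$ and the second factor is bounded in $L^4$, Cauchy--Schwarz gives $\|f_k(\theta)-\bar f_{k,s}(\theta)\|_{L^2}\le C\lambda^{k-s}$. Simultaneously securing independence of the remote past and this exponential error bound --- exactly where Conditions \ref{cond:S3} and \ref{cond:M1} must be combined --- is the delicate part; the remainder is a routine consequence of weak dependence.

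Pointwise a.s.\ convergence then follows by a variance estimate. Since $f_j(\theta)$ is $\sigma(e_{[0,j]},\vref_{[0,j]})$-measurable through the closed-loop recursion while $\bar f_{k,j}(\theta)$ is independent of that, $\mathrm{Cov}(f_j,f_k)=\mathrm{Cov}(f_j,f_k-\bar f_{k,j})$, and Cauchy--Schwarz with the bounds above gives $|\mathrm{Cov}(f_j(\theta),f_k(\theta))|\le C\lambda^{|k-j|}$. Summing over $j,k\le N$ gives $\mathrm{Var}(V_N(\theta))\le C/N$, so along $N=n^2$ a Chebyshev/Borel--Cantelli argument forces $V_{n^2}(\theta)-E[V_{n^2}(\theta)]\to 0$ w.p.\ $1$; the intermediate indices are controlled using $f_k\ge 0$ together with the analogous variance bound on the block sums $\sum_{k=n^2+1}^{(n+1)^2}f_k$, yielding $V_N(\theta)-E[V_N(\theta)]\to 0$ w.p.\ $1$ for each fixed $\theta$.

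Finally I would make the convergence uniform. The derivative part of Condition \ref{cond:M1} bounds $\|\tfrac{d}{d\theta}\hat F_k(u_{[0,k]};\theta)\|$ by $C+C\sum_m\alpha^{k-m}\|u_m\|$, so the mean value theorem gives $|V_N(\theta)-V_N(\theta')|\le L_N\|\theta-\theta'\|$ with $L_N=\tfrac2N\sum_k G_k\Phi_k$, where $G_k$ dominates the derivative and $\Phi_k$ dominates $\|\varepsilon_k(\theta)\|$ uniformly in $\theta$; both have bounded moments, the same argument makes $L_N$ a.s.\ eventually bounded by some $\bar L$, and $E[V_N]$ is Lipschitz with the uniform constant $E[L_N]\le C$. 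Covering the compact set $\pardomain$ by finitely many balls of radius $\delta$ centred at $\theta^{(1)},\dots,\theta^{(M)}$, the triangle inequality gives, eventually a.s.,
\[
\sup_{\theta\in\pardomain}\big|V_N(\theta)-E[V_N(\theta)]\big| \le (\bar L + C)\delta + \max_{i}\big|V_N(\theta^{(i)})-E[V_N(\theta^{(i)})]\big|.
\]
Letting $N\to\infty$ using the pointwise result at the finitely many centres, and then $\delta\to 0$, yields $\sup_\theta|V_N(\theta)-E[V_N(\theta)]|\to 0$ w.p.\ $1$, which is the claim.
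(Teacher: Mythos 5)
The paper does not actually prove this lemma --- it is imported verbatim from Ljung (1978) --- so there is no in-paper proof to compare against; your reconstruction is a faithful and correct rendering of the standard argument from that reference. The decomposition into (i) uniform fourth-moment bounds from Conditions~\ref{cond:S3} and~\ref{cond:M1}, (ii) restarted prediction errors built from $\bar y_{k,s},\bar u_{m,s}$ that are independent of the remote past and exponentially close in $L^4$ to the true ones (with the rate $\alpha^{1/4}$ coming from splitting the fading-memory convolution at $m=s$), (iii) the covariance-decay bound giving $\mathrm{Var}(V_N)=O(1/N)$ and Borel--Cantelli along $N=n^2$ with monotone interpolation, and (iv) uniformity over the compact $\pardomain$ via the Lipschitz bound supplied by the derivative part of Condition~\ref{cond:M1} and a finite cover, is exactly Ljung's proof, and each step as you state it is sound.
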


\subsection{Contracting discrete-time dynamics}

Neuronal systems are most commonly represented by 
state-space models, and so we will rely on the state-space
formalism of contraction theory
\cite{lohmiller_contraction_1998} to analyze the
identification problem. We present both the
discrete-time and continuous-time definitions in sequence,
as they are both relevant to us.

First, consider the discrete-time system 
\begin{subequations}
	\label{eq:DT_system}
	\begin{align}
	\label{eq:DT_dynamics}
	 x_{k+1} &= f(x_k,u_k) \\
	 \label{eq:DT_output}
	 y_k &= h(x_k,u_k)
	\end{align}
\end{subequations}
where $f$ and $h$ are continuously differentiable 
functions, $u:\setint_+\to \setreal^\ninputs$
is the input signal, $y:\setint_+\to\setreal^\noutputs$ is
the output signal, and
%%% We are not using X as state-space anymore!
$x:\setint_+ \to 
%X\subset
\setreal^\nstates$ is the 
state vector. We denote by $x_k=\phi_{k,s}(u,x_s)$ 
the solution of \eqref{eq:DT_dynamics} that starts at time 
$s$ and is evaluated at time $k\ge s$, when 
\eqref{eq:DT_dynamics} is subject to the input sequence
$u=u_{[0,\infty]}$ and initial condition $x_s$. We say a set
$X~\subseteq~\setreal^{\nstates}$ is positively invariant,
uniformly on $U \subseteq \setreal^{\ninputs}$, if 
$\phi_{k,0}(u,x_0) \in X$ for $x_0 \in X$, $u_k \in U$,
and $k\in\setint_+$.

\begin{defn}[\cite{lohmiller_contraction_1998}]
The discrete-time dynamics \eqref{eq:DT_dynamics} is said 
to be exponentially contracting in a set
$X~\subseteq~\setreal^{\nstates}$, uniformly (in $u$) 
on $U \subseteq \setreal^{\ninputs}$, if there exist a
symmetric matrix sequence $P_k(x) \ge \epsilon I > 0 $ 
and a constant $\alpha \in (0,1)$ such that 
\begin{equation}
	\label{eq:DT_contraction} 
	\frac{\partial f }{\partial x}^\top 
	P_{k+1}(f(x,u))
	\frac{\partial f  }{\partial x} 
	\le
	\alpha^2 P_k(x)
\end{equation}
for all $k \in \setint_+$, $x \in X$, and $u \in U$.
\end{defn}

We call $P_k(x)$ the contraction metric, and $\alpha$ the
contraction rate.
The result below will be instrumental in connecting
the contraction property to the PEM conditions of
the previous section. For simplicity, we work with a
constant contraction metric.

\begin{lem}
\label{lem:contraction_expo_stab} 
	Consider the discrete-time system \eqref{eq:DT_system}.
	Let
	\begin{equation}
		\label{eq:output_operator} 
		y_k = F_k(u_{[0,k]};x_0) =
		h(\phi_{k,0}(u,x_0),u_k)
	\end{equation}
	For some $\beta > 0$, assume \eqref{eq:DT_dynamics} is
	exponentially contracting in 
	a positively invariant, convex, closed and bounded set $X$,
	%a convex, closed and bounded set $X$, 
	uniformly on $U~=~[-\beta,\beta]^{\ninputs}$, with a constant
	 $P>0$. Then there are $C_1,C_2>0$ and 
	$\alpha \in (0,1)$ such that
		\begin{equation}
			\label{eq:expo_stable_output} 
			\begin{split}
			&\big\|F_k\left(u_{[0,k]};x_0\right)-
			F_k\left(\tilde{u}_{[0,k]};
			\tilde{x}_0\right)\big\| \\
				&\le  C_1 \sum_{m=0}^{k} \alpha^{k-m} 
				\|u_m - \tilde{u}_m\|
				+  C_2 \, \alpha^k \|x_0 - \tilde{x}_0\|
			\end{split}
		\end{equation}
		for all $k\ge 0$, $u,\tilde{u} 
		\in \inpclass^\ninputs$,
		and $x_0,\tilde{x}_0\in X$.
\end{lem}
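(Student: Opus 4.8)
The plan is to transfer the pointwise contraction estimate \eqref{eq:DT_contraction} into an incremental exponential bound between two state trajectories, and then propagate that bound through the output map $h$. Fix $u,\tilde{u}\in\inpclass^\ninputs$ and $x_0,\tilde{x}_0\in X$, and set $x_k=\phi_{k,0}(u,x_0)$, $\tilde{x}_k=\phi_{k,0}(\tilde{u},\tilde{x}_0)$, $\delta_k = x_k-\tilde{x}_k$. Since $u_k,\tilde{u}_k\in U$ and $X$ is positively invariant uniformly on $U$, both trajectories stay in $X$ for all $k$, so \eqref{eq:DT_contraction} is available along each of them. Throughout I would measure lengths in the constant metric via $\|v\|_P^2 = v^\top P v$; because $\epsilon I \le P$, this norm is equivalent to the Euclidean one, with $\sqrt{\epsilon}\,\|v\| \le \|v\|_P \le \sqrt{\sigma_{\max}[P]}\,\|v\|$.

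The core of the argument is a one-step estimate for $\delta_k$. I would split
\begin{equation*}
	\delta_{k+1} = \big(f(x_k,u_k)-f(\tilde{x}_k,u_k)\big) + \big(f(\tilde{x}_k,u_k)-f(\tilde{x}_k,\tilde{u}_k)\big),
\end{equation*}
and, using convexity of $X$, write the first bracket as $A_k\delta_k$ with the mean-value Jacobian $A_k=\int_0^1 \frac{\partial f}{\partial x}(\tilde{x}_k+s\delta_k,u_k)\,ds$. The key step is that $A_k$ inherits the contraction bound, i.e.\ $\|A_k v\|_P \le \alpha\|v\|_P$: pulling $\|\cdot\|_P$ inside the integral by the triangle inequality and using that \eqref{eq:DT_contraction} gives $\|\tfrac{\partial f}{\partial x}v\|_P \le \alpha\|v\|_P$ at every point of the segment (which lies in $X$) yields the claim. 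The second bracket is a pure input perturbation, bounded by $L_f\|u_k-\tilde{u}_k\|$ since $f$ is $C^1$, hence Lipschitz, on the compact set $X\times U$. Together these give the scalar recursion $\|\delta_{k+1}\|_P \le \alpha\|\delta_k\|_P + \sqrt{\sigma_{\max}[P]}\,L_f\|u_k-\tilde{u}_k\|$.

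Unrolling this discrete Gronwall inequality and returning to the Euclidean norm, I obtain constants $C_1',C_2'>0$ with
\begin{equation*}
	\|x_k-\tilde{x}_k\| \le C_2'\,\alpha^k\|x_0-\tilde{x}_0\| + C_1'\sum_{m=0}^{k-1}\alpha^{k-m}\|u_m-\tilde{u}_m\|,
\end{equation*}
the factor $\alpha^{-1}$ from the shifted exponent being absorbed into $C_1'$. Finally, Lipschitz continuity of $h$ on $X\times U$ applied to \eqref{eq:output_operator} gives $\|F_k(u_{[0,k]};x_0)-F_k(\tilde{u}_{[0,k]};\tilde{x}_0)\| \le L_{h,x}\|x_k-\tilde{x}_k\| + L_{h,u}\|u_k-\tilde{u}_k\|$. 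Substituting the trajectory bound and noting that the direct term $L_{h,u}\|u_k-\tilde{u}_k\|$ is exactly the missing $m=k$ summand (weight $\alpha^0$), I can fold it into the sum and read off $C_1,C_2$ as suitable multiples of $C_1',C_2'$ and the Lipschitz constants, which is \eqref{eq:expo_stable_output}.

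I expect the main obstacle to be the key step above: showing that the mean-value Jacobian $A_k$, not merely the pointwise Jacobian, satisfies the $P$-contraction bound. This is precisely where convexity of $X$ (so the integration segment stays where \eqref{eq:DT_contraction} holds) and the constant metric (so the same $P$ appears at both ends of the step) are essential; with a state-dependent metric $P_k(x)$ one would additionally have to compare $P_{k+1}(f(x,u))$ with $P_k(x)$ along the segment, which the constant-metric hypothesis neatly avoids.
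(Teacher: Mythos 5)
Your proposal is correct and follows essentially the same route as the paper's proof: the paper changes coordinates via $z=\Theta x$ with $P=\Theta^\top\Theta$ and applies the mean value theorem on the convex invariant set to get the one-step bound $\alpha\|\zeta-\tilde\zeta\|+L_1\|\upsilon-\tilde\upsilon\|$, which is the same as your working directly in the $\|\cdot\|_P$ norm with the mean-value Jacobian. The subsequent unrolling of the recursion and the Lipschitz bound on $h$ to absorb the feedthrough term into the $m=k$ summand match the paper's argument step for step.
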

\begin{proof}
	See the Appendix \ref{proof:contraction_expo_stab}. 
\end{proof}

\subsection{Contracting continuous-time dynamics}

Consider the continuous-time nonlinear system
\begin{equation}
	\label{eq:CT_system} 
	\dot{x}(t) = f(x(t),u(t))
\end{equation}
where $f$ is a continuously differentiable function, 
$u:~\setreal_+\to \setreal^\ninputs$ is 
an input signal and 
%%% We are not using X as state-space anymore!
$x:\setreal_+ \to 
%X \subseteq 
\setreal^\nstates$ 
is the state vector.

\begin{defn}
\label{def:contractive_system} 
The continuous-time dynamics \eqref{eq:CT_system} is said to be 
exponentially contracting in a set 
$X~\subseteq~\setreal^{\nstates}$, uniformly (in $u$) on 
$U~\subseteq~\setreal^{\ninputs}$, if there exists a continuously
differentiable symmetric matrix $P(x,t) \ge \epsilon I > 0$
%= \Theta(x,t)^\top \Theta(x,t) 
and a constant $\lambda > 0$ such that 
\begin{equation}
	\label{eq:CT_contraction} 
	\frac{\partial f}{\partial x}^\top P(x,t) 
	+ P(x,t) \frac{\partial f}{\partial x}  
	+ \dot{P}(x,u,t) \le  -2\lambda P(x,t)
\end{equation}
for all $t \in \setreal_+$, $x \in X$, and $u\in U$.
\end{defn}

Alternatively, by writing $P = \Theta^\top \Theta$,
\eqref{eq:CT_contraction} can be written as
$\tfrac{1}{2}\left( F + F^\top \right) \le -\lambda I$, 
with 
\begin{equation*}
	F = \left(\dot{\Theta} + \Theta \frac{\partial f}
	{\partial x} \right) \Theta^{-1},
\end{equation*}

\section{Conductance-based models under feedback}
\label{sec:models} 

Conductance-based models are biophysical neuronal models
that admit the circuit representation shown in Figure
\ref{fig:conductance_based}. While the framework of the
present paper holds for multiple-input-multiple-output
models, we focus on the single-input single-output case.
Such models were first introduced in the seminal work of
Hodgkin and Huxley \cite{hodgkin_quantitative_1952}. For a
general introduction, the reader is referred to Chapters 3
and 5 in \cite{keener_mathematical_2009}, or
textbooks of neurophysiology such as 
\cite{hille_ionic_1984,izhikevich_dynamical_2007,ermentrout_mathematical_2010}.
To date, conductance-based modeling remains the central
paradigm of biophysical neuronal modeling
\cite{almog_is_2016}. 

Our main results will concern the identification of 
discrete-time stochastic conductance-based models.
However, it is relevant to first introduce these models 
in a continuous-time and deterministic setting 
(Section \ref{sec:CT_models}). This allows us to prove the
\textit{output contraction} property (Section 
\ref{sec:output_feedback}), which is central to our 
results. This property is also satisfied by discrete-time
conductance-based models, which we introduce, along with
the noise setting, at the end of the section.

\begin{figure}[t]
	\centering
	\if\usetikz1
	\ctikzset{label/align = smart}
	\begin{tikzpicture}[scale=0.70,
		american resistors,american voltages]
		\newcommand{\cbtop}{4.2}

		\draw (0,0) to[C,l_={\small $c$},v^<=$v(t)\;\;$] 
		(0,\cbtop);
		
		\draw (0,\cbtop) to (5,\cbtop);
		\draw (0,0) to (5,0);
		\node (dots1) at (5.5,\cbtop)[] {$\cdots$}; 
		\node (dots2) at (5.5,0)[] {$\cdots$};
		\draw (6,\cbtop) to (8,\cbtop);
		\draw (6,0) to (8,0);
		
		\draw (2,\cbtop) to[R,l=$\gmax_0$] (2,\cbtop/2) 
			to[battery1,l=$\nu_0$,i>=$i_0(t)$] (2,0);
		\draw (4,\cbtop) to[vR,l=$g_1(t)$] (4,\cbtop/2)
		to[battery1,l=$\nu_1$,i>=$i_1(t)$] (4,0);	
		\draw (8,\cbtop) to[vR,l=$g_{\nchannels}(t)$] 
		(8,\cbtop/2)to[battery1,l=$\nu_\nchannels$,
				i>=$i_\nchannels(t)$] (8,0);
		
		\draw (2,\cbtop+1) 
			to[short,o-,i=$\Iapp(t)$] (2,\cbtop);%+e(t)
		\draw (2,0) to[short,-o] (2,-1);
	\end{tikzpicture}
	\else
		\includegraphics[scale=1]{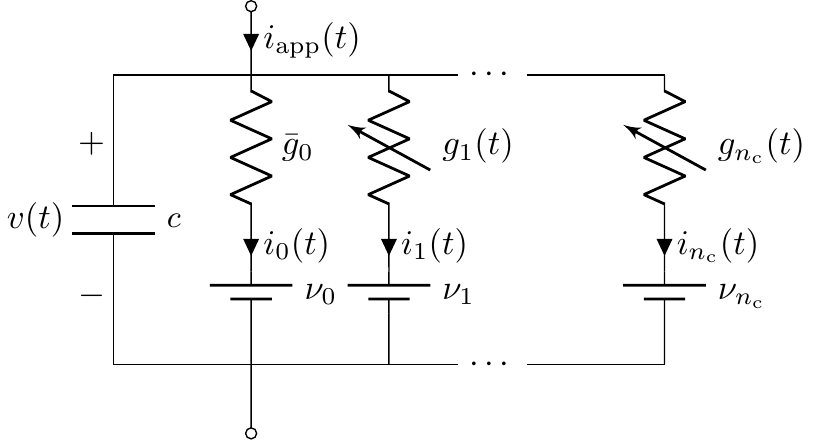}
	\fi
	\caption{Schematic representation of a neuronal
	system.} 
	\label{fig:conductance_based}
\end{figure}

\subsection{Conductance-based models}
\label{sec:CT_models} 

In a conductance-based model, the neuronal membrane 
is modeled by an ideal capacitor of capacitance $c>0$.
The voltage across the membrane, which is the output of 
the model, is given by $v(t) \in \setreal$. The
neuron possesses $\nchannels\in\setnat$ different types of ion
channels embedded in its membrane. These ion channels allow
ionic currents to flow across the membrane according
to Kirchhoff's law,
\begin{equation}
	\label{eq:Kirchhoff} 
	c\,\dot{v}(t) = -\sum_{j=0}^{\nchannels} i_j(t) + \Iapp(t)
\end{equation}
where each current $i_j(t)$, $j=1,\dotsc,\nchannels$, models
an ionic current. The ionic currents not explicitly included in
the model are lumped into a leak  current $i_0(t)$. In addition,
the membrane voltage is affected by an external applied current 
$\Iapp(t)$.

All currents in a conductance-based model obey Ohm's law.
The leak current 
\begin{equation}
	\label{eq:leak_current} 
	i_0(t) = \gmax_0 (v(t)-\nu_0)
\end{equation}
is characterized by a constant conductance $\gmax_0>0$ 
and a constant reversal potential $\nu_0 \in \setreal$. In
contrast,  the conductances of the ionic currents are
voltage-dependent. This dependence is the key source of
nonlinearity of conductance-based models. 
Owing to the original proposal of Hodgkin and Huxley, each 
ionic current has a nonlinear state-space model of the form
\begin{subequations}
	\label{eq:ion_channel}
	\begin{align}
		\label{eq:activation} 
			\tau_{m,j}(v) \, \dot{m}_j & =  
			%\frac{1}{\tau_{m,j}(v)} 
			%\left( 
			- m_{j} + m_{\infty,j}(v) 
			%\right), 
			\\
		\label{eq:inactivation} 
			\tau_{h,j}(v) \, \dot{h}_{j} & =  
			%\frac{1}{\tau_{h,j}(v)} 
			%\left( 
			- h_{j} + h_{\infty,j}(v) 
			%\right), 
			\\
		\label{eq:ionic_current} 
		i_j(t) &= \gmax_j  m_{j}(t)^{\alpha_j}
		h_{j}(t)^{\beta_j}(v(t)-\nu_j)
	\end{align}
\end{subequations}
with $j=1,\dotsc,\nchannels$.
The constants $\gmax_j>0$ are called the maximal 
conductances, and $\nu_j\in\setreal$ are called
reversal potentials.
The variables $m_j$ and $h_j$ are called gating variables,
and take values in the closed interval $[0,1]$. Their
dynamics are defined by the continuously differentiable
time-constant functions 
\[\tau_{m,j},\tau_{h,j}:\setreal \to [\tau_{\min},
\tau_{\max}]\subset \setreal_+\] 
and activation functions
\[m_{\infty,j},h_{\infty,j}:\setreal \to 
[0,1]\]
where $\tau_{\min}>0$.
The gating variables modulate the current
conductance with a voltage-dependent first-order lag
dynamics. The exponents $\alpha_j$ and 
$\beta_j$ belong to $\setint_+$, and whenever $\alpha_{j^*} 
= 0$ or $\beta_{j^*}=0$, we ignore \eqref{eq:activation} or
\eqref{eq:inactivation} for $j=j^*$, respectively. These 
exponents, along with the gating variable dynamics 
\eqref{eq:activation}-\eqref{eq:inactivation}, constitute
the \textit{kinetic model} of the $j^{\text{th}}$ ion
channel
\cite{keener_mathematical_2009,hille_ionic_1984}.

A compact representation of the entire model
\eqref{eq:Kirchhoff}-\eqref{eq:ion_channel}  has the 
state-space structure 
\begin{subequations}
	\label{eq:CT_cb_model}
	\begin{align}
		\label{eq:CT_v_dyn}
		c\,\dot{v} &= -\Iint(v,\w) + \Iapp 
		%+ e 
		\\
		\label{eq:CT_int_dyn}
		\dot{\w} &= A(v)\w + b(v)
	\end{align}
\end{subequations}
where the vector $\w \in [0,1]^\ngat$ collects all
the gating variables $m_j$ and $h_j$ for which 
$\alpha_j > 0$ and $\beta_j>0$, respectively, and 
\begin{equation}
	\label{eq:i_int} 	
	\Iint(v,\w)=\gmax_0(v-\nu_0) + 
	\sum_{j=1}^{\nchannels} \gmax_j\,m_j^{\alpha_j}\,
	h_j^{\beta_j}(v-\nu_j)
\end{equation}
denotes the total membrane internal current. 
Note that the matrix $A(v)$ is diagonal, and $b(v)$ is
a vector-valued function of $v$. 
The  model \eqref{eq:CT_cb_model}, with input $\Iapp$
and output $v$, is in the standard global normal 
form of nonlinear systems \cite{byrnes_passivity_1991}.
The dynamics \eqref{eq:CT_int_dyn} is called the 
\textit{internal dynamics} of the system. 
An important fact about \eqref{eq:CT_cb_model} is
that the \textit{first} derivative of the output explicitly 
depends on the input, while the internal dynamics does not --- 
in other words, \eqref{eq:CT_cb_model} has a relative degree of
\text{one}. This fact will be explored later on.

\begin{exmp}
	\label{ex:HH} 	
	The Hodgkin-Huxley (HH) model 
	\cite{hodgkin_quantitative_1952} is the prototypical
	conductance-based model. It is given by
	\eqref{eq:Kirchhoff}, with $c=1$ 
	$\mathrm{\upmu F/cm^2}$, and it has 
	two ionic currents ($\nchannels = 2$): a sodium current 
	$i_{\text{Na}}$, and a potassium current 
	$i_{\text{K}}$. It also includes a leak current 
	$i_{\text{L}}$. The currents are given by  
	\begin{equation}
		\label{eq:HH_currents}
		%\resizebox{\linewidth}{!}{$
		\begin{split}
		i_0&=i_{\text{L}} = 0.3\,(v+54.4) \\
		i_1&=i_{\text{Na}}	= 120\, m_{\text{Na}}^3 
		h_{\text{Na}} (v-55) \\	
		i_2&=i_{\text{K}} = 36 \, m_{\text{K}}^4(v+77) \\
%		-0.3\,(v+54.4)
%		- 36 \left(\w^{(2,1)}\right)^4\,(v+77) \\
%		&- 120 \left(\w^{(1,1)}\right)^3\left(\w^{(1,2)}\right)(v-55) + \Iapp, \\ 	
		\end{split}  
		%$}
	\end{equation}
	The three internal variables are the sodium activation
	$m_1 = m_{\text{Na}}$, sodium inactivation
	$h_1 = h_{\text{Na}}$,
	and potassium activation $m_2 = m_{\text{K}}$ 
	(there is no potassium inactivation in the model, i.e.,
	$\beta_2 = 0$). The
	vector $\w$ collecting these variables is given by
	$\w = (m_1,h_1,m_2)^\top = 
	(m_{\text{Na}},h_{\text{Na}},m_{\text{K}})^\top$.
	The different voltage dependent time-constants 
	$\tau(v)$ and activation functions $m_\infty(v)$ and 
	$h_\infty(v)$ are illustrated in Figure
	\ref{fig:HH_kinetic_functions}, and are detailed 
	in Appendix \ref{app:kinetics}.
	
Each gating variable remains in the interval $[0,1]$, and,  
in the absence of external inputs, the voltage remains in
the interval $[\nu_2,\nu_1]=[\nu_{\text{K}},
\nu_{\text{Na}}]=[-77,55]$. This is illustrated in Figure
\ref{fig:HH_stepExample}, where a \textit{spiking} limit
cycle oscillation occurs in response to a small constant 
input.

\begin{figure}[t]
	\centering
	\if\usetikz1
	\begin{tikzpicture}
		\begin{axis}[height=4.5cm,
			width=4.5cm,
			axis y line = left,
			xlabel={$v \mathrm{[mV]}$}, axis x line = bottom,
			ymin=0,ymax=10,
			ytick={0,5,8.6,10},%8.582,
			yticklabels={0,5,8.6,10},
			legend pos=north east,
			name = HH_tau,
			]
			\addplot[smooth,thick,color=blue] 
				file{./Data/tau_m.txt};
			\addlegendentry{\tiny $\tau_{m,1}$}
			\addplot[smooth,thick,color=red] 
				file{./Data/tau_h.txt};
			\addlegendentry{\tiny $\tau_{h,1}$}
			\addplot[smooth,thick,color=green] 
				file{./Data/tau_n.txt};
			\addlegendentry{\tiny $\tau_{m,2}$}
		\end{axis}
		\begin{axis}[height=4.5cm,
			width=4.5cm,
			axis y line = left,
			xlabel={$v \mathrm{[mV]}$}, axis x line = bottom,
			ymin=0,ymax=1,
			legend style={at={(0.5,0.45)},anchor=west},
			ytick={0,1},
			anchor = north west,
			at = {(HH_tau.north east)},
			xshift = 1.25cm
			]
			\addplot[smooth,thick,color=blue] 
				file{./Data/sig_m.txt};
				\addlegendentry{\tiny 
					$m_{\infty,1}$} 
			\addplot[smooth,thick,color=red] 
				file{./Data/sig_h.txt};
				\addlegendentry{\tiny 
				$h_{\infty,1}$}
			\addplot[smooth,thick,color=green] 
				file{./Data/sig_n.txt};
			\addlegendentry{\tiny $m_{\infty,2}$}
		\end{axis}
	\end{tikzpicture}
	\else
		\includegraphics[scale=1]{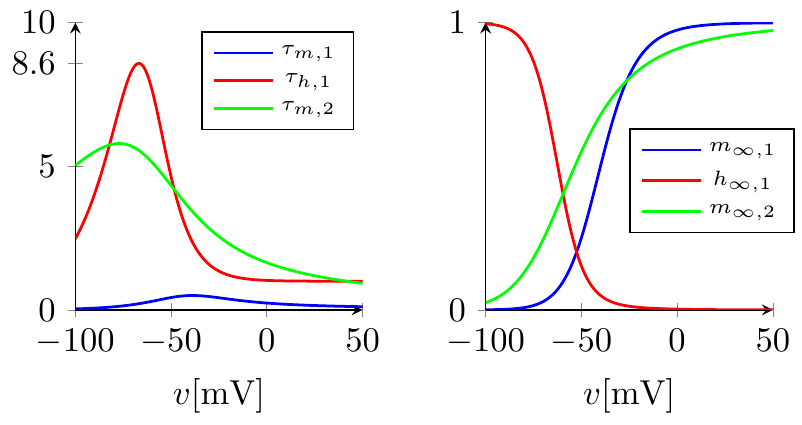}
	\fi
	\caption{
	Left: Time constant functions in the Hodgkin-Huxley model.
	Right: Activation functions in the Hodgkin-Huxley 
	model.}
	\label{fig:HH_kinetic_functions}
\end{figure}

\begin{figure}[b]
	\centering
	\if\usetikz1
	\begin{tikzpicture}
		\begin{axis}[height=4cm,width=8cm,
			ylabel={$v(t)\;\mathrm{[mV]}$}, 
			axis y line = left,
			xlabel={$t\;\mathrm{[ms]}$}, 
			axis x line = bottom, 		
			ymin=-85,ymax=65,
			%legend pos=outer north east,
			each nth point={2},
			%skip coords between index={0}{10000},
			%xtick=\empty,
			ytick={-77,0,55},
			yticklabels={$-77$,$0$,$55$},
			name=voltagePlot
			]			
			\addplot [color=blue]
				table[x index=0,y index=1] 
				{./Data/HH_full_stepExample_output.txt};	
		\end{axis}
		\begin{axis}[height=4cm,width=8cm,
			ylabel={$\w(t)$}, 
			axis y line = left,
			xlabel={$t\;\mathrm{[ms]}$}, 
			axis x line = bottom, 		
			ymin=-0.1,ymax=1.1,
			%legend pos=outer north east,
			each nth point={2},
			%skip coords between index={0}{10000},
			%xtick=\empty,
			ytick={0,1},
			at={(voltagePlot.below south)},
			anchor=north
			]
			\addplot[color=blue,solid]
				table[x index=0,y index=2] 
			{./Data/HH_full_stepExample_output.txt};
			\addlegendentry{$m_1$}; 
			\addplot[color=green,solid]
				table[x index=0,y index=3] 
			{./Data/HH_full_stepExample_output.txt};
			\addlegendentry{$h_1$};
			\addplot[color=red,solid]
				table[x index=0,y index=4] 
			{./Data/HH_full_stepExample_output.txt};
			\addlegendentry{$m_2$};	
			\end{axis}
	\end{tikzpicture}
	\else
		\includegraphics[scale=1]{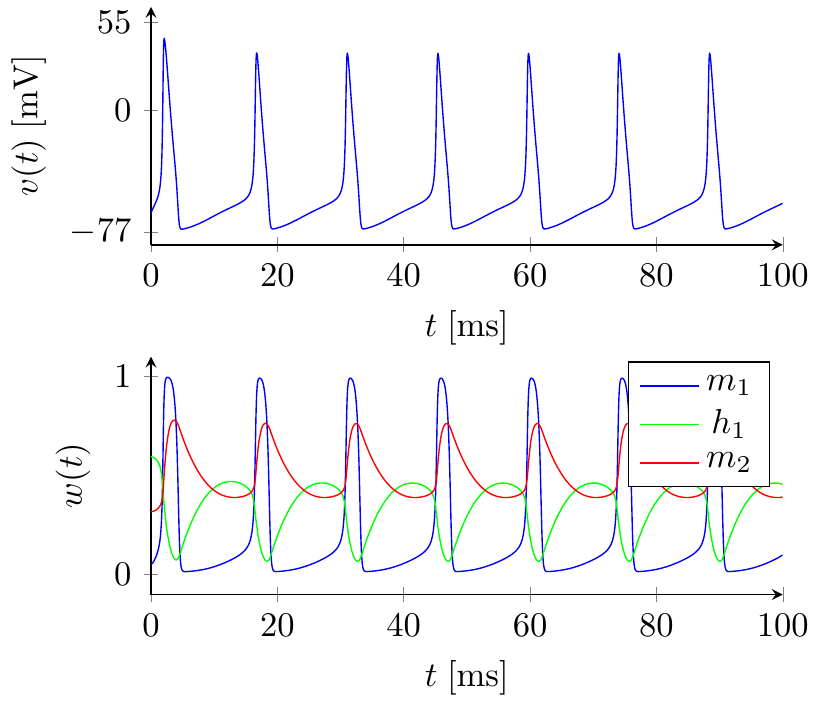}
	\fi
	\caption{Simulated state trajectories of the Hodgkin-Huxley
	model (Example \ref{ex:HH}) for $\Iapp(t) = 10 \;
	\mathrm{\upmu A/cm^2}$.}
	\label{fig:HH_stepExample}
\end{figure}
	 
\end{exmp}

The internal dynamics \eqref{eq:CT_int_dyn} of the 
Hodgkin-Huxley model in
Example \ref{ex:HH} is exponentially contracting in 
$\setreal^3$, uniformly in $v$ on $\setreal$ (see 
Definition \ref{def:contractive_system}). This is verified 
with the constant metric $P=p I$, for any $p>0$, and any 
$\lambda$ such that $0<\lambda<1/\tau_{\max}$: in that case, 
we have
\[
	-2\,p\,\mathrm{diag}\left(\tfrac{1}{\tau_{m,1}(v)},
	\tfrac{1}{\tau_{h,1}(v)},\tfrac{1}{\tau_{m,2}(v)}
	\right)  \le -\tfrac{2}{\tau_{\max}} pI
\]
and we could pick, for instance, $\lambda < 1/8.6$ (see
Figure \ref{fig:HH_kinetic_functions}, left). This is 
in fact a general property of conductance-based models:

\begin{prop}
\label{prop:contractive}
The internal dynamics \eqref{eq:CT_int_dyn} of
\eqref{eq:ion_channel}-\eqref{eq:i_int} is exponentially
contracting in $\setreal^{\ngat}$, uniformly in
$v$ on $\setreal$, i.e., there is a $P_\w>0$ and a $\lambda_\w>0$
such that 
\begin{equation}
	\label{eq:contracting_internal} 
	P_\w A(v) + A(v)^\top P_\w \le -2\lambda_\w P_\w
\end{equation}
for all $v\in\setreal$.
\end{prop}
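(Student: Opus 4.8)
The plan is to verify the contraction inequality \eqref{eq:contracting_internal} directly, with a constant \emph{scalar} metric $P_\w = pI$, exploiting the diagonal structure of $A(v)$. First I would read off from the gating dynamics \eqref{eq:activation}--\eqref{eq:inactivation} that each surviving component $w_i$ of $\w$ obeys a first-order lag $\dot{w}_i = -w_i/\tau_i(v) + b_i(v)$, where $\tau_i$ is one of the time-constant functions $\tau_{m,j}$ or $\tau_{h,j}$ and $b_i(v)$ is the corresponding entry of $b(v)$. Hence $A(v)$ is the diagonal matrix with entries $-1/\tau_i(v)$, and the forcing $b(v)$ does not enter the Jacobian.

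The key --- and essentially only --- step is an elementary uniform bound on these diagonal entries. By hypothesis each time-constant function maps $\setreal$ into $[\tau_{\min},\tau_{\max}]$ with $\tau_{\min}>0$, so $\tau_i(v) \le \tau_{\max}$ for every $v$, whence each diagonal entry satisfies $-1/\tau_i(v) \le -1/\tau_{\max} < 0$ uniformly in $v$. Since a diagonal matrix is symmetric and its eigenvalues are its diagonal entries, this is the matrix inequality $A(v) \le -(1/\tau_{\max})\,I$ for all $v \in \setreal$.

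It then remains to choose the metric. Taking $P_\w = I$ (or any $pI$ with $p>0$), symmetry of $A(v)$ gives $P_\w A(v) + A(v)^\top P_\w = 2A(v)$, and the bound above yields $2A(v) \le -(2/\tau_{\max})\,I = -2\lambda_\w P_\w$ with $\lambda_\w = 1/\tau_{\max}$. This is exactly \eqref{eq:contracting_internal}, so the internal dynamics \eqref{eq:CT_int_dyn} is exponentially contracting in $\setreal^{\ngat}$ uniformly in $v$ on $\setreal$, generalizing the Hodgkin--Huxley computation given just above the statement (where any $\lambda_\w \in (0,1/\tau_{\max}]$ is admissible).

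I do not anticipate any genuine obstacle: the result follows directly from the first-order-lag form of the kinetic model together with the uniform positivity of the time constants. The only point worth a word is the uniformity in $v$, which is automatic because $\tau_i(v) \le \tau_{\max}$ holds on all of $\setreal$ by assumption --- no compactness, equilibrium, or Lyapunov construction beyond the constant scalar metric is needed.
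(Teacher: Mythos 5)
Your proof is correct and is essentially the paper's own argument: the paper establishes the proposition by the same computation --- $A(v)$ diagonal with entries $-1/\tau_i(v) \le -1/\tau_{\max} < 0$, constant scalar metric $P_\w = pI$ --- carried out explicitly for the Hodgkin--Huxley model in the paragraph immediately preceding the statement and then asserted in general. The only cosmetic difference is that the paper takes $\lambda_\w$ strictly less than $1/\tau_{\max}$ whereas you observe the endpoint is also admissible under the non-strict inequality in \eqref{eq:CT_contraction}.
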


\subsection{Output feedback contraction}
\label{sec:output_feedback} 

A direct consequence of Proposition~\ref{prop:contractive}
is that a conductance-based model
has a stable inverse. More precisely, using a static output
feedback law, the closed-loop dynamics can be made
exponentially contracting:

\begin{prop}
\label{prop:output_contractive}
Consider a conductance-based model 
\eqref{eq:ion_channel}-\eqref{eq:i_int} subject to the output
feedback law 
\begin{equation}
	\label{eq:CT_output_fb}
	\Iapp(t) = \gain (\vref(t)-v(t)),
\end{equation}
where $\gain > 0$ is a constant gain, and $\vref(t)\in\setreal$ 
is a reference input. Let $\{V_{\gamma}\}$ be a family of closed and
bounded intervals of the real line, 
uniformly in $\gamma>0$. Then, there is a gain $\gamma > 0$
such that the closed-loop dynamics given by 
\eqref{eq:CT_cb_model} and \eqref{eq:CT_output_fb} is
exponentially contracting in $V_\gamma\times [0,1]^\ngat$,
uniformly in $r$ on $\setreal$.
\end{prop}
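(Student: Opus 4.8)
The plan is to exhibit a constant contraction metric for the closed-loop vector field and then dominate the voltage--gating coupling by taking $\gain$ large. Writing the closed-loop state as $x=(v,\w)$ and substituting \eqref{eq:CT_output_fb} into \eqref{eq:CT_cb_model} yields a vector field $f(x,\vref)$ whose Jacobian is
\begin{equation*}
	\frac{\partial f}{\partial x}=
	\begin{pmatrix}
		-\tfrac{1}{c}\left(\partial_v\Iint+\gain\right)
			& -\tfrac{1}{c}\,\partial_\w\Iint \\[2pt]
		\psi(v,\w) & A(v)
	\end{pmatrix},
\end{equation*}
where $\psi(v,\w)=A'(v)\w+b'(v)$. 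A first observation is that $\vref$ enters \eqref{eq:CT_output_fb} additively, so the Jacobian is independent of $\vref$; the required uniformity in $\vref$ on $\setreal$ is therefore automatic, and it remains to verify the matrix inequality \eqref{eq:CT_contraction} pointwise on $V_\gain\times[0,1]^\ngat$. I would take the constant block-diagonal metric $P=\mathrm{diag}(p,P_\w)$, with $p>0$ arbitrary and $P_\w>0$ the metric supplied by Proposition~\ref{prop:contractive}. Since $P$ is constant, $\dot P=0$ and the condition reduces to $\tfrac{\partial f}{\partial x}^\top P+P\tfrac{\partial f}{\partial x}\le-2\lambda P$.

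Next I would compute this symmetric matrix block by block. The $(2,2)$ block is $P_\w A(v)+A(v)^\top P_\w\le-2\lambda_\w P_\w$ by \eqref{eq:contracting_internal}; the $(1,1)$ entry is $-\tfrac{2p}{c}\left(\partial_v\Iint+\gain\right)$, where $\partial_v\Iint=\gmax_0+\sum_j\gmax_j m_j^{\alpha_j}h_j^{\beta_j}\ge\gmax_0>0$; and the off-diagonal block is the row vector $q^\top=-\tfrac{p}{c}\,\partial_\w\Iint+\psi^\top P_\w$. The decisive structural fact is that, because $\{V_\gain\}$ lies in a single fixed bounded interval, every entry of the Jacobian other than the $-\gain/c$ term is bounded on $V_\gain\times[0,1]^\ngat$ by a constant that does \emph{not} depend on $\gain$; in particular $\|q\|\le M$ for some $M$ independent of $\gain$.

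I would then absorb the coupling by a Young inequality: for a tangent vector $(\delta v,\delta\w)$ and any $\eta>0$,
\begin{equation*}
	2\,\delta v\,(q^\top\delta\w)\le \eta\,\delta v^2
	+\tfrac{M^2}{\eta}\,\|\delta\w\|^2 .
\end{equation*}
Using $P_\w\ge\epsilon_\w I$ to bound $\|\delta\w\|^2\le\epsilon_\w^{-1}\,\delta\w^\top P_\w\delta\w$, the gating part of the quadratic form is at most $\bigl(\tfrac{M^2}{\eta\epsilon_\w}-2\lambda_\w\bigr)\delta\w^\top P_\w\delta\w$, so fixing $\eta=M^2/(\lambda_\w\epsilon_\w)$ renders it at most $-\lambda_\w\,\delta\w^\top P_\w\delta\w$. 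The voltage part is at most $\bigl(-\tfrac{2p}{c}\gain+\eta\bigr)\delta v^2$ because $\partial_v\Iint>0$, so choosing $\gain$ large enough that $-\tfrac{2p}{c}\gain+\eta\le-\lambda_\w p$ bounds the whole form by $-\lambda_\w\bigl(p\,\delta v^2+\delta\w^\top P_\w\delta\w\bigr)=-2\tilde\lambda\,\delta x^\top P\,\delta x$ with $\tilde\lambda=\lambda_\w/2>0$, which is exactly \eqref{eq:CT_contraction}. I expect this coupling step to be the main obstacle: the cross term must be dominated \emph{uniformly} in $\gain$, and this succeeds only because the uniform boundedness of $\{V_\gain\}$ keeps $q$ — which carries the $v-\nu_j$ factors of $\partial_\w\Iint$ — bounded independently of the gain. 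Were $V_\gain$ allowed to grow with $\gain$, the constant $M$ would grow and the high-gain domination would fail.
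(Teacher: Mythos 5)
Your proof is correct and follows essentially the same route as the paper's: a constant block-diagonal metric built from the internal contraction metric of Proposition~\ref{prop:contractive}, the observation that the uniform boundedness of $\{V_\gamma\}$ keeps the voltage--gating coupling bounded independently of the gain, and high-gain domination of that coupling. The only cosmetic difference is that you certify negativity of the $2\times 2$ block quadratic form via Young's inequality, where the paper uses a Schur complement (arriving at the explicit bound $\gain>\tfrac{c}{\lambda_\w}\sigma_{\max}[Q]^2$); these are interchangeable here.
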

\begin{proof}

We follow an argument similar to 
\cite[Section 2.2]{wang_partial_2004}. 
The Jacobian (with respect to the states) of the 
closed-loop dynamics \eqref{eq:CT_cb_model},
\eqref{eq:CT_output_fb} is given by
\[
	J = 
		\left[ 
		\begin{array}{cc}
			-\tfrac{1}{c}(\tfrac{\partial \Iint }
			{\partial v}
			+\gamma) &
			-\tfrac{1}{c}
			\tfrac{\partial \Iint}{\partial \w} \\
			\tfrac{\partial A}{\partial v}\w
			+\tfrac{\partial b}{\partial v} &
			A(v)
		\end{array}
	\right]
\]
(we omit dependencies on $\w$ and $v$ for clarity).
By Proposition \ref{prop:contractive}, the internal
dynamics \eqref{eq:CT_int_dyn} has a contraction metric 
$P_\w = \Theta_\w^\top\Theta_\w > 0$ associated with the rate
$\lambda_\w>0$. We will use the matrix
\[
	\Theta = 
	\left[
			\begin{array}{cc}
				c & 0 \\
				0 & \Theta_\w   				
			\end{array}
	\right]
\]
to define a contraction metric $P = \Theta^\top\Theta$ for
the closed-loop system. Define $F = \Theta J \Theta^{-1}$
(this is the generalized Jacobian of the closed-loop
system). Then
\begin{equation*}
	\label{eq:jacobian_interconnection} 
	F = 
	\left[
		\begin{array}{cc}
		F_{11}
		&-\tfrac{\partial \Iint}{\partial \w} \; 
		\Theta_\w^{-1} 	\\ 
		\tfrac{1}{c} \Theta_\w \; 
		\left(\tfrac{\partial A}{\partial v}\w
			+\tfrac{\partial b}{\partial v}\right)
		& F_{22}
		\end{array} 
	\right]
\end{equation*}
with
\begin{equation}
	\label{eq:jacobian_v} 
	F_{11} = - 
	\tfrac{1}{c}\left( \tfrac{\partial \Iint}{\partial v} + \gain \right)
\end{equation}
and
\begin{equation}
	\label{eq:jacobian_q}
	F_{22} = \Theta_\w A(v) 
	\Theta_\w^{-1}
\end{equation}

We will use $F \prec 0$ to denote 
$\tfrac{1}{2}(F+F^\top)\le -\epsilon I$ for all 
$(v,\w^\top)\in V_\gamma\times [0,1]^\ngat$ and some 
$\epsilon > 0$. By Definition~\ref{def:contractive_system},
to demonstrate contraction of the closed-loop system, we
have to show that $F \prec 0$. To do that, we will
require that $F_{11} \prec 0$ and $F_{22} \prec 0$.
Contraction of the internal dynamics 
(Proposition \ref{prop:contractive}) automatically implies
\begin{equation}
\label{eq:F22} 
\tfrac{1}{2} \left( F_{22} + F_{22}^\top 
\right) \le -\lambda_\w I
\end{equation} 
for all $(v,\w^\top)\in\setreal^{\ngat+1}$, 
and thus $F_{22}\prec 0$. Furthermore, 
\begin{equation}
	\label{eq:dhdv} 
\frac{\partial \Iint}{\partial v}(v,\w) = \gmax_0 +
\sum_{j=1}^{\nchannels} \gmax_j m_j^{\alpha_j} h_j^{\beta_j} 
\ge \gmax_0 > 0
\end{equation}
for $\w\in[0,1]^\ngat$ and thus $F_{11} \prec 0$ as well.
Since $F_{11}\prec 0$, by a standard Schur complement
result \cite[pp. 472]{horn_matrix_1985}, we have 
$F \prec 0$ if and only if
%This is just schur complement 
\begin{equation}
	\label{eq:cond_gen_jac} 
	\tfrac{1}{2}(F_{22}+F_{22}^\top) 
	<  Q^\top F_{11}^{-1} Q
\end{equation}
where $Q$ is the row vector given by
\begin{equation}
	\label{eq:Q_proof} 
	Q = \frac{1}{2} 
\left(-\tfrac{\partial \Iint}{\partial \w} \; \Theta_\w^{-1} + 
\tfrac{1}{c}
\left(\tfrac{\partial A}{\partial v}\w
			+\tfrac{\partial b}{\partial v}\right)^\top \Theta_\w^\top \right) 
\end{equation}

From \eqref{eq:jacobian_v}, \eqref{eq:dhdv},
\eqref{eq:F22} and
\eqref{eq:cond_gen_jac}, $F\prec 0$ if and only if
\begin{equation}
	\label{eq:contracting_gain_2} 
	\gain I >  
c\left[-\tfrac{1}{2}(F_{22}+F_{22}^\top)\right]^{-1}
	Q^\top Q - \tfrac{\partial \Iint}{\partial v}
\end{equation}

By \eqref{eq:F22} and \eqref{eq:dhdv}, a sufficient
condition for \eqref{eq:contracting_gain_2} to hold is
\begin{equation}
	\label{eq:suf_gen_jac} 
	\gain > \frac{c}{\lambda_\w}\sigma_{\max}[Q]^2
\end{equation}
where $\sigma_{\max}[Q]$ is the largest singular value of $Q$.

Since the continuous functions $\partial \Iint/\partial \w$,
$\partial A/\partial v$ and $\partial b/\partial v$ in
\eqref{eq:Q_proof} are bounded 
on any closed and bounded $V_\gamma\times~[0,1]^\ngat$, it 
follows that 
$\sigma_{\max}[Q]$ is also bounded
on such a set. Since, by assumption,
$V_\gamma$ is uniformly bounded in $\gamma$, a sufficiently large
$\gain$ ensures that \eqref{eq:suf_gen_jac} is satisfied on some
$V_\gamma\times [0,1]^\ngat$.

\end{proof}

The expressions \eqref{eq:contracting_gain_2} and
\eqref{eq:suf_gen_jac} can be used to estimate a lower bound 
on the gain that is necessary to make a conductance-based 
model contracting in a given region of state-space. Depending 
on the choice of the contraction metric $P_\w$, this bound 
can of course be conservative, as illustrated by the 
following example.
\begin{exmp}
	\label{ex:gain} 
	For the Hodgkin-Huxley model 
	(Example~\ref{ex:HH}), choose $P_\w = I$ and $\lambda_\w 
	= 1/8.6 < 1/\tau_{\max} $. Consider the set
	$[-77,55]\times[0,1]^3$.
	Computing
	the right-hand side of \eqref{eq:contracting_gain_2}
	for $v = -77$, $m_1 = h_1 = m_2 = 1$, and 
	$\Theta_\w = I$, leads to the lower bound of
	$ 5.1 \times 10^8$ $\mathrm{mS/cm^2}$ on the 
	gain necessary to ensure exponential contraction of the
	closed-loop	system. Alternatively, consider the
	contraction metric
	$P_\w = 10^6 \times \text{diag}(0.21,3.80,3.16)$.
	In this case, a random search over the set
	$[-77,55]\times[0,1]^3$ gives a less conservative
	lower bound of $2.7 \times 10^3$ $\mathrm{mS/cm^2}$.
\end{exmp}

\subsubsection{Non-ohmic ion currents}

Instead of the Ohmic ionic current
\eqref{eq:ionic_current}, we could have used the
more general formulation
\[
	i_j = \gmax_j  m^{\alpha_j}_{j}
		h^{\beta_j}_{j} p_j(v)
\]
with 
\[
	p_j(v) = \sum_{\ell=0}^{d_j} 
		\eta_\ell \, v^\ell
\]
where each $d_j$ is an arbitrarily large polynomial degree,
and $\eta_\ell\in\setreal$. 
In most non-Ohmic ionic current models, $p_j(v)$ is
a monotonically increasing function  
\cite[Chapter 3]{keener_mathematical_2009}, and the
reversal potential $\nu_j\in\setreal$ is the value where
$p(\nu_j)=0$. In this case, just as in the Ohmic case,
we have
\begin{equation}
	\label{eq:nernst}
	 \text{sign}(p_j(v)) = \text{sign} (v-\nu_j),
\end{equation}
where by convention $\text{sign}\;0 = 0$. 
Since the vast majority of ionic current models is Ohmic, 
we keep the formulation \eqref{eq:ionic_current}, noting 
that all our results can be easily adapted to encompass 
non-Ohmic currents such that \eqref{eq:nernst} holds. 

\subsubsection{The voltage-clamp experiment}
\label{sec:voltage_clamp}

The output contraction property of conductance-based models 
is a consequence of the very experimental protocol that 
has been used to identify neuronal systems in the past:
the voltage-clamp experiment, pioneered by Hodgkin and
Huxley. The voltage-clamp experiment is nothing but a 
high-gain output feedback experimental protocol employed to
stabilize the neuron and to determine its inverse dynamics
through step response experiments. The principle of that
experiment is illustrated in Figure 
\ref{fig:voltage_clamp}. In the limit of high-gain feedback, 
the current drawn from the amplifier to clamp the voltage to the reference
$r(t)$ is by definition the output of the internal dynamics driven by
the voltage $v(t)=r(t)$. Electrophysiologists rely on the stability of that
inverse system to model the internal dynamics through a series
of step responses. In that sense, the contraction property of conductance-based
models is an experimental property of neurons rather than the property of a 
specific mathematical model of the ionic currents.

\begin{figure}[t]
	\centering
	\if\usetikz1
	\resizebox{.8\linewidth}{!}{
	\ctikzset{bipoles/resistor/width=0.3}
	\begin{tikzpicture}[american voltages][scale=1]
		
		\node (CB) 
		[draw,align=center,minimum width=2cm,minimum
		height=3.5cm,anchor=south west,text width=2cm] at
		(0,0)	{Neuronal membrane 
		(Fig. \ref{fig:conductance_based})};
		
		\draw ($(CB.north west)-(4,.5)$) 
		node[op amp,yscale=-1] (amp1) {}
		(amp1.+) node[left ] {$\vref(t)$}
	  	(amp1.out) to[R=$\gmax_e$, i>_=$\Iapp(t)$] 
	  	($(CB.north west)-(0,.5)$);	
	  	
	  	\node at ($(CB.north west)-(4,.5)$) []
	  	{$\bar{\gamma}$};
	  	
		\draw  ($(CB.south west)+(-2,1)$) 
		node[op amp,yscale=-1,xscale=-1] (amp2) {}
		($(CB.south west)+(0,0.5)$) to (amp2.-)
	  	(amp2.out) -| (amp1.-) node[left]{$v(t)$}
	  	($(CB.south west)+(-.5,0.5)$) to
	  	($(CB.south west)+(-.5,0)$)	node[ground]{}
	  	($(CB.north west)+(-.5,-0.5)$) |- (amp2.+);
	  	
	  	\node at ($(CB.south west)+(-2,1)$) []
	  	{$1$};
	\end{tikzpicture}
	}
	\else
		\includegraphics[scale=1]{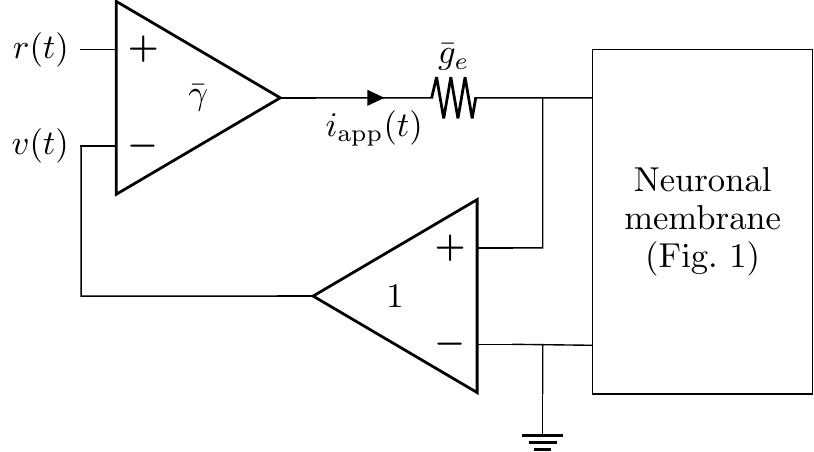}
	\fi
	\caption{The voltage-clamp experiment: electrodes
	are used to inject the current $\Iapp(t)$ and measure
	the voltage $v(t)$ of the neuronal membrane. The 
	amplifiers are ideal differential amplifiers, and
	$\gmax_e$ models the electrode conductance.
	When $\bar{\gamma}\gg 1$, this implements the
	feedback law \eqref{eq:CT_output_fb} with 
	$\gamma =  \bar{\gamma}\gmax_e$.}
	\label{fig:voltage_clamp}
\end{figure}

Models of specific ion channel types have been accumulated
over time by electrophysiologists.  Today, online databases
such as ModelDB \cite{mcdougal_twenty_2017} contain large
libraries of ion channels models. The structure of those
models is often used in parametric identification of new
types of neurons (see, e.g.,
\cite{druckmann_novel_2007,huys_efficient_2006}). The
identified parameters include the maximal conductances 
$\gmax_j$ and the Nersnt potentials $\nu_j$.
The purpose of the next sections will be to show that
the classical PEM provides consistent estimates for these
parameters. 

\subsection{Discrete-time stochastic conductance-based models}
\label{sec:DT_CB_models} 

We now turn to the task of identifying a conductance-based 
model from sampled current-voltage data, while taking into
account the intrinsic noise that affects neuronal systems. 
Given a sampling period $\samplingT>0$, we will consider 
the discrete-time stochastic model
\begin{subequations}
\label{eq:DT_CL_model} 
	\begin{align}
		\label{eq:DT_v_system}
			c \tfrac{v_{k+1}-v_k}{\samplingT} &=  
			-\Iint(v_k,\w_k) + \gain(\vref_k-v_k) 
			+ e_k 
				\\
		\label{eq:DT_int_dynamics}
		 \tfrac{\w_{k+1}-\w_k}{\samplingT} &= 
%			 + \samplingT f^{(j)}(v_k,\w_k^{(j)}),
		 A(v_k)\w_k + b(v_k),
	\end{align}
\end{subequations} 
which is a forward-Euler discretization of the closed-loop 
system given by \eqref{eq:ion_channel}-\eqref{eq:i_int} and 
\eqref{eq:CT_output_fb}, with an additive noise
$e_k$ on the input current. The noise current $e_k$ is used 
to model the aggregate effect of ion channel fluctuations 
\cite{goldwyn_what_2011,rowat_interspike_2007} and background 
neuronal activity \cite[Chapter 8]{gerstner_neuronal_2014}.
This system is illustrated in Figure \ref{fig:open-loop_id}.

\begin{figure}[h]
	\centering
	\if\usetikz1
	\begin{tikzpicture}
		% Open-loop
		\node (x_dyn) at (0,0) [fblock]
				{Eq. \eqref{eq:DT_CL_model}};
		\coordinate (input) at 
			($(x_dyn.west)-(0.5,-0.2)$);
		\coordinate (noise) at 
			($(x_dyn.west)-(0.5,0.2)$);
		\coordinate[right=0.5 of x_dyn] (output);
		\draw[-latex] (input) -- ($(x_dyn.west)-(0,-0.2)$)
			node[left,pos=0]{$\vref_k$};
		\draw[-latex] (noise) -- ($(x_dyn.west)-(0,0.2)$)
			node[left,pos=0]{$e_k$};
		\draw[-latex] (x_dyn) -- (output)
			node[above,pos=1]{$v_k$};
	\end{tikzpicture}
	\else
		\includegraphics[scale=1]{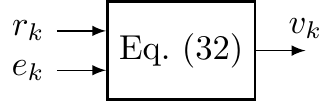}
	\fi
	\caption{Block-diagram of the system \eqref{eq:DT_CL_model}.}
	\label{fig:open-loop_id}
\end{figure}

The discretization scheme leading to 
\eqref{eq:DT_CL_model} is classical in system identification 
of biological neurons \cite{huys_efficient_2006}
and in simulations of neuronal behavior
\cite{koch_methods_1989,soudry_conductance-based_2012,baladron_mean-field_2012}.
While more advanced discretization schemes could be 
considered, we stress that both the stochastic discrete-time 
model \eqref{eq:DT_CL_model} and the deterministic 
continuous-time model 
\eqref{eq:ion_channel}-\eqref{eq:i_int} are empirical 
mean-field approximations of the molecular dynamics 
governing the opening and closing of ion channels 
\cite{hille_ionic_1984}. For this reason, one should not
regard \eqref{eq:DT_CL_model} as an {\it approximation} 
of the continuous-time model, but merely as its 
discrete-time stochastic counterpart.
It is also worth noting that an advanced
discretization method tailored for nonlinear systems
in global normal form \cite{yuz_sampled-data_2005} 
cannot improve on the forward-Euler scheme when the 
continuous-time system has a relative degree of one 
--- which is the case for \eqref{eq:CT_cb_model}. 

The remainder of the paper will address the parametric 
identification of \eqref{eq:DT_CL_model}.
The next result shows that when the inputs are bounded, we can
always find a positively invariant set in which discrete-time
conductance-based models can be made contracting by output
feedback:
\begin{prop}
	\label{prop:DT_contraction} 
	Consider the system \eqref{eq:DT_CL_model}.
	Assume that $|\vref_k|,|e_k|<\inpbound$ for all $k \ge 0$; 
	there exist a large enough $\gamma>~0$, a small enough 
	$\samplingT>~0$, and an interval 
	$[v_{\min},v_{\max}]\subset\setreal$ such	that 
	$[v_{\min},v_{\max}]~\times~[0,1]^\ngat$ is a positively
	invariant set for \eqref{eq:DT_CL_model},
	%uniformly	on $[-\beta,\beta]^2$. Furthermore,
	and \eqref{eq:DT_CL_model} is exponentially contracting in 
	$[v_{\min},v_{\max}]~\times~[0,1]^\ngat$, uniformly 
	in $(r,e)$ on $[-\beta,\beta]^2$. Furthermore, there is a 
	small enough $\samplingT>0$ such that $[0,1]^\ngat$ is a positively 
	invariant set for the subsystem \eqref{eq:DT_int_dynamics}, 
	and \eqref{eq:DT_int_dynamics} is exponentially contracting 
	in $[0,1]^\ngat$, uniformly in $v$ on $\setreal$.
\end{prop}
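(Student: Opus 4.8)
The plan is to exploit that \eqref{eq:DT_CL_model} is a forward-Euler discretization $x_{k+1}=x_k+\samplingT\,f_c(x_k,\vref_k,e_k)$ of the continuous-time closed loop, whose state Jacobian is therefore $\partial f/\partial x = I+\samplingT J$, with $J$ the continuous-time Jacobian already analyzed in the proof of Proposition~\ref{prop:output_contractive}. I would treat invariance and contraction separately, and in both cases transfer the continuous-time results (Propositions~\ref{prop:contractive} and~\ref{prop:output_contractive}) to discrete time by taking $\samplingT$ small.

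For invariance, note the $\w$-update in \eqref{eq:DT_int_dynamics} decouples componentwise into $w_{i,k+1}=(1-\samplingT/\tau(v_k))\,w_{i,k}+(\samplingT/\tau(v_k))\,w_{\infty}(v_k)$. Since $w_\infty(v)\in[0,1]$ and $\tau(v)\ge\tau_{\min}$, picking $\samplingT\le\tau_{\min}$ makes this a convex combination of two points in $[0,1]$, so $[0,1]^\ngat$ is positively invariant uniformly in $v$ on $\setreal$. For the full system, I would use that $\Iint$ is affine in $v$ with $v$-independent slope $\partial_v\Iint(\w)\in[\gmax_0,g_{\max}]$, where $g_{\max}:=\gmax_0+\sum_{j=1}^{\nchannels}\gmax_j$; the voltage update then rewrites as $v_{k+1}=a_k v_k+(1-a_k)v^\ast_k$, with $a_k=1-\tfrac{\samplingT}{c}(\partial_v\Iint(\w_k)+\gain)$ and instantaneous fixed point $v^\ast_k=\tfrac{I_0(\w_k)+\gain\,\vref_k+e_k}{\partial_v\Iint(\w_k)+\gain}$, where $I_0(\w_k)$ collects the bounded reversal-potential terms. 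Over $\w_k\in[0,1]^\ngat$ and $|\vref_k|,|e_k|<\inpbound$ the value $v^\ast_k$ ranges over a bounded set $[v_{\min},v_{\max}]$ that stays bounded as $\gain\to\infty$ (in fact $v^\ast_k\to\vref_k$, so the range tends to $[-\inpbound,\inpbound]$); and for $\samplingT\le c/(g_{\max}+\gain)$ one has $a_k\in[0,1]$. Then $v_{k+1}$ is a convex combination of $v_k\in[v_{\min},v_{\max}]$ and $v^\ast_k\in[v_{\min},v_{\max}]$, so $[v_{\min},v_{\max}]\times[0,1]^\ngat$ is positively invariant.

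For contraction, set $V_\gamma=[v_{\min},v_{\max}]$; this family is uniformly bounded in $\gain$, so Proposition~\ref{prop:output_contractive} supplies a $\gain$, a constant metric $P>0$, and a rate $\lambda>0$ with $J^\top P+PJ\le-2\lambda P$ on the box. Expanding the discrete-time condition \eqref{eq:DT_contraction} with $\partial f/\partial x=I+\samplingT J$ gives $(I+\samplingT J)^\top P(I+\samplingT J)=P+\samplingT(J^\top P+PJ)+\samplingT^2 J^\top P J\le(1-2\samplingT\lambda+\samplingT^2 M)\,P$, where $M$ bounds $J^\top P J\le M P$ on the compact box (finite since $J$ is continuous and $P\ge\epsilon I$). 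Hence $\alpha^2:=1-2\samplingT\lambda+\samplingT^2 M<1$ for $\samplingT<2\lambda/M$. The identical computation with $A(v)$ in place of $J$ and the metric $P_\w$ of Proposition~\ref{prop:contractive} yields discrete-time contraction of \eqref{eq:DT_int_dynamics} on $[0,1]^\ngat$ uniformly in $v$ on $\setreal$, since $A(v)$ is diagonal with entries $-1/\tau(v)$ bounded uniformly in $v$.

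The main obstacle is the coupled selection of $\gain$ and $\samplingT$ together with the forward-Euler overshoot in the invariance argument: $\gain$ must be large for continuous-time contraction, yet $\samplingT$ must shrink with $\gain$, both because $a_k\ge0$ requires $\samplingT\le c/(g_{\max}+\gain)$ and because the bound $M$ grows with $\gain$ (the $(1,1)$ entry of $J$ scales like $\gain$), forcing $2\lambda/M\to0$. The resolution is to fix the order of quantifiers: first choose $\gain$ large enough for Proposition~\ref{prop:output_contractive} on the uniformly bounded $V_\gamma$, which fixes $[v_{\min},v_{\max}]$, and then choose $\samplingT$ below the minimum of $\tau_{\min}$, $c/(g_{\max}+\gain)$, $2\lambda/M$ and $2\lambda_\w/M_\w$. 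Uniformity in $(\vref,e)$ is automatic for the contraction estimate, since the state Jacobian is independent of $\vref_k$ and $e_k$; these inputs enter the problem only through the invariant set.
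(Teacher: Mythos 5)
Your proposal is correct and follows essentially the same route as the paper: positive invariance of $[0,1]^{\ngat}$ via the convex-combination form of the Euler-discretized gating update for $\samplingT\le\tau_{\min}$, a uniformly (in $\gamma$) bounded invariant voltage interval, and transfer of the continuous-time contraction of Propositions \ref{prop:contractive} and \ref{prop:output_contractive} to discrete time through the expansion $(I+\samplingT J)^\top P (I+\samplingT J)\le \bigl(1-2\samplingT\lambda+\samplingT^2 M\bigr)P$, which is exactly the paper's auxiliary Lemma \ref{lem:discretized_contracting}, with the same $\gamma$-first-then-$\samplingT$ ordering of quantifiers. The only (cosmetic) difference is the voltage-invariance step, where you write the update as a convex combination of $v_k$ and the instantaneous fixed point $v_k^\ast$ for $\samplingT\le c/(\gmax_{\Sigma}+\gamma)$ with $\gmax_{\Sigma}=\sum_j\gmax_j$, whereas the paper takes the image of a candidate trapping interval $V_\gamma$ under the one-step map and closes the argument with a monotonicity case analysis; both constructions produce the same uniformly bounded interval.
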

\begin{proof}
	See Appendix \ref{proof:DT_contraction}.
\end{proof}

\section{Identification of neuronal models with the PEM}
\label{sec:identification}

In this section, we discuss the problem of parametric 
identification of the discrete-time stochastic 
conductance-based model \eqref{eq:DT_CL_model}. In Section 
\ref{sec:id_setup}, we frame the problem as one of closed-loop
system identification, and in Section \ref{sec:consistency}, 
we treat the case in which we can consistently identify the 
system's capacitance, maximal conductances and reversal 
potentials.

\subsection{Data-generating system}
\label{sec:id_setup}

Since the data is generated by \eqref{eq:DT_CL_model}, 
we could attempt to identify a discrete-time 
conductance-based model by considering the setup 
shown in Figure \ref{fig:open-loop_id}. However, in that
setup, the \textit{input}-additive noise and the system 
nonlinearities make it difficult to obtain an optimal 
one-step-ahead predictor for the output $v_k$. 
If the noise in the measurements of $v_k$ is negligible, 
we can avoid this issue by viewing \eqref{eq:DT_CL_model}
as a feedback interconnection, and identifying the component 
in the interconnection for which $e_k$ becomes 
\textit{output}-additive noise, and $v_k$ becomes an input. 
This is achieved by partitioning \eqref{eq:DT_CL_model} into
\begin{subequations}
	\label{eq:true_feedback}
	\begin{align}
		\label{eq:true_feedback_v} 
		v_{k+1} &= v_k - \samplingT y_k \\
		\label{eq:true_feedback_u} 
		u_k &= 
		\left[
			\begin{array}{c}
				u_{1,k}		\\
				u_{2,k}
			\end{array}
		\right]
		=
		\left[
			\begin{array}{c}
			\gain(\vref_k-v_k) \\ 
			v_k
			\end{array} 
		\right]				
	\end{align}
\end{subequations}
and
\begin{subequations}
	\label{eq:true_system}
	\begin{align}
		\label{eq:true_system_x} 
		\w_{k+1} &= \w_k + 
		\samplingT 
		%f\left(u_{2,k},\w_k\right)
		\left(\,A(u_{2,k}) \w_k + b(u_{2,k}) \,\right)
		\\
		\label{eq:true_system_y}
		y_k &= \frac{1}{c} 
		(\,
		%\Iint(u_{2,k},\w_k) 
		\sum_{j=1}^{\nchannels} \gmax_j\,m_{j,k}^{\alpha_j}\,
			h_{j,k}^{\beta_j}(u_{2,k}-\nu_j)\\
		\nonumber
			& \hspace{4em}+ \gmax_0(u_{2,k}-\nu_0)	- u_{1,k} - e_k 
		\,) 
	\end{align}
\end{subequations}
where $\w$ collects the states $m_j$ and $h_j$ for
which $\alpha_j,\beta_j~>~0$, and $A(\cdot),b(\cdot)$ 
are determined by 
\eqref{eq:activation}-\eqref{eq:inactivation}.

Most of the dynamics (and any unknown parameters) of 
\eqref{eq:DT_CL_model} are concentrated in the subsystem 
\eqref{eq:true_system}, which has an input $u_k$,
an output $y_k$, and is subject to output-additive
noise $-e_k/c$. It is on the identification of 
\eqref{eq:true_system} that we will focus.
This is a closed-loop identification problem 
(see Section \ref{sec:PEM}): 
in particular, if
$\phi^{w}_{k,0}(u_2,w_0)$ is the solution of
\eqref{eq:true_system_x},
the signal $y_k$
%and $u_k$ 
can be written in the form
\eqref{eq:innovations_form},
with
\begin{equation}
	\label{eq:true_system_operator} 
	%\begin{split}
	F_k\left(u_{[0,k]};w_0\right) %= F_k\left(u_{[0,k]};w_0\right) - \tfrac{1}{c}e_k.
		= \tfrac{1}{c}\left(\,g(u_{2,k},\phi^{w}_{k,0}(u_{2},w_0)\,) 
		- u_{1,k}\right) %\\
	%\end{split}
\end{equation}
where $g(\cdot,\cdot)$ is given by \eqref{eq:i_int}.
Similarly, $u_k$ can be written in the form
\eqref{eq:feedback_block}.
This leads to the setup in Figure 
\ref{fig:closed-loop_id}.

\begin{assum}
	\label{asp:discrete_input} 
	The noise $e_k$ in \eqref{eq:true_system_y} 
	is a sequence of	independent	random variables 
	with $E[e_k] = 0$ and a finite variance 
	$E[e_k^2] = \sigma_e^2 > 0$.
	All realizations of $e$ belong to $\inpclass$.
\end{assum}

\begin{assum}
	\label{asp:v_noise} 
	The signals $\vref_k$ and $v_k$ are exactly known
	(no measurement noise), and $\vref$ is a deterministic 
	signal that belongs to $\inpclass$.
\end{assum}

Assumption \ref{asp:v_noise} is consistent with the voltage-clamp
experiment: it allows for current noise but assumes that the
voltage is perfectly measured. 

\begin{assum}
	\label{asp:contracting_true_system}
	The closed-loop dynamics
	\eqref{eq:true_feedback}-\eqref{eq:true_system} is
	exponentially contracting in a positively invariant set
	$[v_{\min},v_{\max}]	\times[0,1]^{\ngat}$,
	uniformly on $[-\beta,\beta]^2$. 	Additionally, 
	$(v_0,\w_0^\top) \in [v_{\min},v_{\max}]	
	\times[0,1]^{\ngat}$. 
\end{assum}

It follows directly from Proposition~\ref{prop:DT_contraction}
that Assumption~\ref{asp:contracting_true_system} can
always be verified\footnote{There is a tradeoff in the choice of
the values of $\gain$ and $\samplingT$, which is made clear in 
the proof of Proposition \ref{prop:DT_contraction}. Increasing
the value of $\gain$ might require decreasing the value of 
$\samplingT$ so that contraction of the discrete-time system is
preserved.} for large enough 
$\gamma>0$ and small enough $\samplingT>0$. Notice that since the 
system parameters are unknown prior to identification, in
practice we cannot check contraction of the closed-loop system by 
direct calculations. An experimental alternative is to probe the 
system and check whether input-output properties implied by 
contraction are verified. Such properties include exponential 
convergence to a unique equilibrium point, under constant input 
(implied by Lemma \ref{lem:contraction_expo_stab}), and 
entrainment by periodic inputs \cite[Theorem 2]{russo_global_2010}. 
We will return to this point in Section \ref{sec:examples}. 

\begin{lem}
\label{lem:S3} 
Under 
Assumptions \ref{asp:discrete_input}-\ref{asp:contracting_true_system}, 
the closed-loop system \eqref{eq:true_feedback}-\eqref{eq:true_system}
satisfies Condition \ref{cond:S3}.
\end{lem}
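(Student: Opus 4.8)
The plan is to verify Condition~\ref{cond:S3} by constructing, for each pair $k \ge s$, a ``restarted'' copy of the closed-loop trajectory that forgets the inputs and noise up to time $s$, and to show via contraction that this copy tracks the true trajectory with an exponentially decaying fourth moment. Throughout I would view \eqref{eq:true_feedback}-\eqref{eq:true_system} as a single time-invariant discrete-time system with state $x_k = (v_k,\w_k)$, exogenous input $(\vref_k,e_k)$, measured output $y_k$, and auxiliary feedback signal $u_k$. By Assumptions~\ref{asp:discrete_input} and \ref{asp:v_noise} every realization of $(\vref,e)$ lies in $[-\inpbound,\inpbound]^2$, and by Assumption~\ref{asp:contracting_true_system} the state stays in the positively invariant, convex, closed and bounded set $X = [v_{\min},v_{\max}]\times[0,1]^\ngat$, where the dynamics is exponentially contracting.

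First I would record a boundedness fact: since $x_k \in X$ for all $k$ and the maps $g$, $A$, $b$ together with the feedback law are continuous, the signals $y_k$ and $u_k$ are uniformly bounded, so $E[\|y_k\|^4]$ and $E[\|u_k\|^4]$ are bounded by a constant. This already disposes of the base case $k=s$, where the convention $\bar y_{s,s}=\bar u_{s,s}=0$ gives $E[\|y_s\|^4],\,E[\|u_s\|^4] \le C$ for $C$ large enough. For $k \ge s+1$ I would fix a point $\xi^\ast \in X$ and define $\bar x_{k,s}$ as the trajectory reset to $\xi^\ast$ at time $s+1$ and then propagated forward using the true inputs $(\vref_m,e_m)$, $m \ge s+1$; the surrogates $\bar y_{k,s}$ and $\bar u_{k,s}$ are obtained by evaluating the output and feedback maps at $\bar x_{k,s}$ with the current inputs $\vref_k,e_k$. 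By construction $\bar x_{k,s}$, and hence $\bar y_{k,s},\bar u_{k,s}$, depend only on $\vref_{[s+1,k]}$ and $e_{[s+1,k]}$; since $\vref$ is deterministic (Assumption~\ref{asp:v_noise}) and the $e_k$ are independent (Assumption~\ref{asp:discrete_input}), they are independent of $\vref_{[0,s]},e_{[0,s]}$ but not of $\vref_{[0,k]},e_{[0,k]}$, exactly as required.

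The core estimate comes from contraction. Applying Lemma~\ref{lem:contraction_expo_stab} to the closed-loop system with the \emph{full state} taken as the output, and using that the true trajectory and $\bar x_{k,s}$ are driven by identical inputs on $[s+1,k]$ so that the input-difference sum in \eqref{eq:expo_stable_output} vanishes, I obtain a deterministic pathwise bound $\|x_k - \bar x_{k,s}\| \le C_2\,\alpha_0^{\,k-s-1}\,\|x_{s+1}-\xi^\ast\| \le C_2\,D\,\alpha_0^{\,k-s-1}$, where $D = \mathrm{diam}(X)$ and $\alpha_0 \in (0,1)$. The key cancellation is that $\bar y_{k,s}$ and $\bar u_{k,s}$ share the same $\vref_k,e_k$ with $y_k,u_k$, so in the differences $y_k-\bar y_{k,s}$ and $u_k-\bar u_{k,s}$ those terms drop out and only a function of the state difference $x_k-\bar x_{k,s}$ remains. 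Lipschitz continuity of $g$ and the feedback maps on the compact set $X$ then yields $\|y_k-\bar y_{k,s}\|,\,\|u_k-\bar u_{k,s}\| \le L\,\|x_k-\bar x_{k,s}\|$. Raising to the fourth power, absorbing constants, and setting $\alpha = \alpha_0^4 \in (0,1)$ gives $E[\|y_k-\bar y_{k,s}\|^4],\,E[\|u_k-\bar u_{k,s}\|^4] < C\,\alpha^{k-s}$ for a common $C$, which together with the base case establishes \eqref{eq:condition_S3}.

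The main obstacle is the contraction-to-state step. One must ensure that the exponential convergence is in the full state (not just a scalar output), that it holds pathwise for every noise realization so that the fourth-moment bound follows without further probabilistic work, and that it survives the averaged Jacobian that arises when comparing two trajectories through a line integral over the convex set $X$. All three are precisely what Lemma~\ref{lem:contraction_expo_stab} supplies once the state is declared to be the output, which is why reducing Condition~\ref{cond:S3} to that lemma is the crux of the argument; the remaining bookkeeping (boundedness, the cancellation of $\vref_k,e_k$, and the independence claim) is routine.
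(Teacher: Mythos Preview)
Your proposal is correct and follows essentially the same strategy as the paper's proof: restart the closed-loop trajectory at time $s+1$ from a fixed state, invoke Lemma~\ref{lem:contraction_expo_stab} (with identical inputs on $[s+1,k]$ so the input-difference sum in \eqref{eq:expo_stable_output} vanishes), and use boundedness of the invariant set to handle the constants and the base case $k=s$. The only cosmetic differences are that the paper restarts from the origin rather than a generic $\xi^\ast \in X$, and applies Lemma~\ref{lem:contraction_expo_stab} directly to the output $y_k$ instead of first bounding the state error and then composing with a Lipschitz estimate.
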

\begin{proof}
See the Appendix \ref{proof:S3}.
\end{proof}

\begin{figure}[t]
	\centering
	\if\usetikz1
	\begin{tikzpicture}
		% Closed-loop
		\node (v_dyn) at (0,0) [fblock]
		{$F_k\left(u_{[0,k]};w_0\right)$};
		\node (int_dyn) at (0,-1.5) [fblock,
			 align=left]% below= 0.3 of v_dyn,
		{Eq. \eqref{eq:true_feedback}};
		\node (sum_cl) [sum,right=0.5 of v_dyn,
				label={185:$+$},label={5:$-$}]{};
		\coordinate[left= 0.5 of v_dyn] (sum1);		
		\coordinate (input) at 
			($(int_dyn.east)+(0.5,-0.2)$);	
		\coordinate[right=0.5 of sum_cl] (noise);
		%Connections
		\draw[-latex] (noise) -- (sum_cl) 
			node[right,pos=0]{$\tfrac{1}{c}\Inoi_k$};
		\draw[-latex] (input) -- 
			($(int_dyn.east)+(0,-0.2)$)
			node[right,pos=0]{$\vref_k$};
		\draw[-latex] (v_dyn) -- (sum_cl);
		\draw[-latex] (sum_cl) |- 
			($(int_dyn.east)+(0,0.2)$)
			node[right,pos=0.2]{$y_k$};
		\draw[-latex] (int_dyn) -| (sum1)
			node[left,pos=1]{$u_k$} -- (v_dyn);
	\end{tikzpicture}
	\else
		\includegraphics[scale=1]{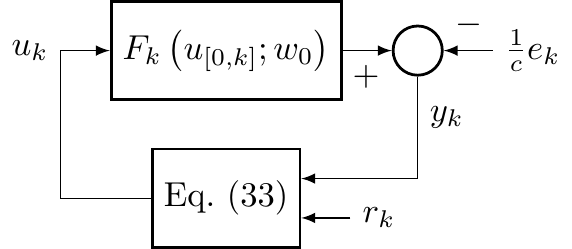}
	\fi
	\caption{Block-diagram of the feedback system
	\eqref{eq:true_feedback}-\eqref{eq:true_system}.
	The mappings $F_k$ are given by 
	\eqref{eq:true_system_operator}.
	}
	\label{fig:closed-loop_id}
\end{figure}

\subsection{Identification with fixed ion 
channel kinetics}
\label{sec:consistency}

Recall that the dynamics \eqref{eq:true_system_x}, 
as well as the exponents $\alpha_j$, $\beta_j$ in 
\eqref{eq:true_system_y}, are determined by ion channel
kinetic models. Given a library of known ion channel 
kinetic models, we will concentrate on identifying 
the parameters $c$, $\gmax_j$, and $\nu_j$ in 
\eqref{eq:true_system_y}, for $j=0,1,\dotsc,\nchannels$. 
This can be achieved by postulating a predictor model 
containing $\nmodel\ge\nchannels$ known ion channel 
kinetic models, chosen a priori. 

For $j=1,\dotsc,\nmodel$, let the predictor states 
be given by $\hat{m}_j$ and $\hat{h}_j$; to each 
of these states, we associate the exponents 
$\hat{\alpha}_j\in\setint_+$ and 
$\hat{\beta_j}\in\setint_+$, respectively. 
We define the predictor by 
\begin{subequations}
	\label{eq:predictor}
	\begin{align}
		\label{eq:predictor_x} 
		\hat{\w}_{k+1} &= \hat{\w}_k + \samplingT 
		%\fmod(v_k,\hat{\w}_k)
		(\hat{A}
		(u_{2,k})\hat{\w}_k + \hat{b}(u_{2,k}))
				\\
		\label{eq:predictor_y}
		\hat{y}_k(\theta) &= 
		\sum_{j=1}^{\nmodel}
		\hat{m}_{j,k}^{\hat{\alpha}_j} \,
		\hat{h}_{j,k}^{\hat{\beta}_j} \,
		(\theta_j^{(1)}%\theta_{j,0} 
		+ 
		\theta_j^{(2)}%\theta_{j,1} 
		u_{2,k})
		\\
		\nonumber
		& \hspace{4em} + 
		\theta_0^{(1)}%\theta_{0,0} 
		+
		\theta_0^{(2)}%\theta_{0,1} 
		u_{2,k} + \theta^{(3)} u_{1,k}
	\end{align}
\end{subequations}
where $u_{1,k}$ and $u_{2,k}$ are given by 
\eqref{eq:true_feedback_u}, the vector 
$\hat{\w}$ collects the gating variables $\hat{m}_j$ and
$\hat{h}_j$ for which $\hat{\alpha}_j>0$ and
$\hat{\beta_j}>0$, respectively, and
$\theta^{(1)},\theta^{(2)}\in\setreal^{\nmodel\times 1}$ 
and $\theta^{(3)}\in\setreal$ are predictor parameters.
The predictor states evolve analogously to the 
(forward-Euler) discretized version of 
\eqref{eq:activation}-\eqref{eq:inactivation}, but 
with activation and time constant functions given by
$\hat{\tau}_{m,j}(\cdot)$, 
$\hat{\tau}_{h,j}(\cdot)$, 
$\hat{m}_{\infty,j}(\cdot)$ and
$\hat{h}_{\infty,j}(\cdot)$.

Comparing \eqref{eq:true_system_y} with 
\eqref{eq:predictor_y}, we see that the predictor 
parameters $\theta_j^{(1)}$, $\theta_j^{(2)}$ and
$\theta^{(3)}$ are meant to identify expressions
involving the true system parameters $c$, $\gmax_j$ 
and $\nu_j$. To formalize this, we make the following
assumption:

\begin{assum}
	\label{asp:model_structure} 
	The model structure \eqref{eq:predictor} 
	contains the true system 
	\eqref{eq:true_system}: we have
	$\hat{\tau}_{m,j}=\tau_{m,j}$, 
	$\hat{\tau}_{h,j}=\tau_{h,j}$, 
	$\hat{m}_{\infty,j}=m_{\infty,j}$, 
	$\hat{h}_{\infty,j}=h_{\infty,j}$, 
	$\hat{\alpha}_j=\alpha_j$ and $\hat{\beta}_j=\beta_j$
	for $j=1,\dotsc,\nchannels \le \nmodel$.
	Additionally, the dynamics 
	\eqref{eq:predictor_x} is exponentially 
	contracting in the positively invariant 
	set $[0,1]^{n_{\hat{w}}}$, uniformly in $v$ on
	$\setreal$. 	
\end{assum}

Again, it follows from Proposition~\ref{prop:DT_contraction}
that Assumption \ref{asp:model_structure} can always
be verified for small enough $\samplingT>0$.
Under Assumption \ref{asp:model_structure}, we now 
see that the true parameter vector, denoted by
$\bar{\theta} = 
	(\bar{\theta}^{(1)\top},\bar{\theta}^{(2)\top},
	\bar{\theta}^{(3)})^\top$,
is given by
\begin{equation}
\label{eq:true_parameters} 
	\begin{split}
	\bar{\theta}_j^{(1)} &=
	\left\{
	\begin{array}{ll}
		-\gmax_j\nu_j/c, &\quad  j=0,1,\dotsc,\nchannels	\\
		0, & \quad  j > \nchannels
	\end{array}
	\right.		
	\\
	\bar{\theta}_j^{(2)} &=
	\left\{
	\begin{array}{ll}
		\gmax_j/c, &\quad j=0,1,\dotsc,\nchannels	\\
		0, &\quad  j > \nchannels
	\end{array}
	\right.  \\
	\bar{\theta}^{(3)} &= -1/c
	\end{split}
\end{equation}

To simplify our results, we will assume the following:

\begin{assum}
	\label{asp:same_initial_conditions} 
	The initial states of the true dynamics
	\eqref{eq:true_system_x} and of
	the predictor dynamics 
	\eqref{eq:predictor_x} satisfy 
	$\hat{m}_{j,0}=m_{j,0}$,  
	$\hat{h}_{j,0}=h_{j,0}$ for 
	$1\le j \le \nchannels$, and
	$\hat{m}_{j,0},\hat{h}_{j,0} \in [0,1]$
	for $\nchannels<j\le \nmodel$.
\end{assum}

As long as Assumptions 
\ref{asp:contracting_true_system} and
\ref{asp:model_structure} are verified, 
due to the contraction property, 
Assumption \ref{asp:same_initial_conditions} can be 
ensured in practice by discarding initial segments 
of the data.

Under Assumptions \ref{asp:discrete_input}, 
\ref{asp:v_noise}, \ref{asp:model_structure} and
\ref{asp:same_initial_conditions}, 
\eqref{eq:predictor} is the optimal mean squared error 
one-step-ahead predictor of $y_k$ in \eqref{eq:true_system}.
The closed-loop identification approach thus avoids 
the intractability in the computation of an optimal
predictor for the forward dynamics output $v_k$. 
	
	Collecting the parameters in a single vector 
	$\theta\in\setreal^{2\nchannels+1}$ given by
\[
	\theta = (\theta^{(1)\top},\theta^{(2)\top},
	\theta^{(3)})^\top,
\]
	we can more compactly write \eqref{eq:predictor_y} as
	\[
		\hat{y}_k(\theta) = \psi_k \theta
	\] 
	with the
	row vector $\psi_k\in\setreal^{1\times(2\nmodel+1)}$
	 given by
	\begin{equation*}
	\begin{aligned}
		\psi_k =& 
		\left(1,\;
		\hat{m}_{1,k}^{\hat{\alpha}_{1}} \,
		\hat{h}_{1,k}^{\hat{\beta}_{1}}\;,
		\; \dotsc\;, \;
		\hat{m}_{\nmodel,k}^{\hat{\alpha}_{\nmodel}} \,
		\hat{h}_{\nmodel,k}^{\hat{\beta}_{\nmodel}}\;,\;
		u_{2,k} \; ,
		\right.
		\\
		&
		\left. u_{2,k} \,	
		\hat{m}_{1,k}^{\hat{\alpha}_{1}} \,
		\hat{h}_{1,k}^{\hat{\beta}_{1}},
		\;\dotsc\;,\;
		u_{2,k} \;
		\hat{m}_{\nmodel,k}^{\hat{\alpha}_{\nmodel}} \,
		\hat{h}_{\nmodel,k}^{\hat{\beta}_{\nmodel}},\;
		u_{1,k} \right)
		\end{aligned}
	\end{equation*}
	Gathering $\psi_k$ in a matrix 
	$\Psi_N\in\setreal^{N\times (2\nmodel+1)}$ given by
	\begin{equation}
		\label{eq:regressor_matrix} 
		\Psi_N = \left[ \psi_N^\top,\psi_{N-1}^\top,\dotsc,
		\psi_1^\top \right]^\top
	\end{equation}
	we find that the vector of model structure outputs
	from time $k=N$ down to time $k=1$ is given by %$k=0$
	\[
		\hat{y}_{[1,N]}^\top(\theta) =
		\Psi_N \theta
	\]
The above formulation shows that the ion channel kinetic
models act as basis operators mapping the input sequence 
$u_{[0,N]}$, given by \eqref{eq:true_feedback_u}, to the
columns of $\Psi_N$. 
	
\begin{assum}[Persistency of excitation]
	\label{asp:persistency_excitation}
	There is a $N^*>0$ such 
	that $\tfrac{1}{N}\Psi_N^\top \Psi_N$ and 
	$E\left[\tfrac{1}{N}\Psi_N^\top \Psi_N \right]$ are 
	positive-definite for all $N>N^*$.
\end{assum}

Assumption \ref{asp:persistency_excitation} is an 
assumption both on the model structure and on $\vref_k$, the
signal used to excite the true system. Intuitively,
we should not include two identical ion channel kinetics in
the model structure, and the excitation signal $\vref_k$ should
be sufficiently rich.

Under Assumption \ref{asp:v_noise}, we are able to compute
\begin{equation*}
	%\label{eq:measured_int_current} 
	y_k = -\frac{v_{k+1}-v_k}{\samplingT}
\end{equation*}
from the measurements, and thus we can form the cost
function $V_N(\theta)$ given by \eqref{eq:cost_function}.
There is practical relevance in the fact that a single 
forward difference of the voltage yields $y_k$, which is a 
consequence of the relative degree one property of neuronal
models. 
We can now state the main identification result:

\begin{thm}
	\label{thm:consistency} 
	Let Assumptions	
	\ref{asp:discrete_input}-\ref{asp:same_initial_conditions} be 
	satisfied. Let $N>N^*$, and let 
	$\hat{\theta}_N = (\hat{\theta}_N^{(1)\top},
	\hat{\theta}_N^{(2)\top},\hat{\theta}_N^{(3)})^\top$ 
	be given by
	\begin{equation}
		\label{eq:parameter_estimate} 
		\begin{split}
				\hat{\theta}_N 
			&= \arg\min_{\theta \in \pardomain} 
			V_N(\theta)\\
			&=\arg\min_{\theta \in \pardomain} 
			\frac{1}{N}		
			\|y_{[1,N]}^\top - \Psi_N\theta\|^2
			\end{split}
	\end{equation}		
	where $y_k$ and $\Psi_N$ are given by
	\eqref{eq:true_system_y} and
	\eqref{eq:regressor_matrix}, respectively, and 
	$\pardomain$ is a compact parameter domain containing
	$\bar{\theta}$, the true parameter vector
	\eqref{eq:true_parameters}.
	Then, we have $\hat{\theta}_{N} \to \bar{\theta}$
	w.p. 1 as $N\to\infty$.

\end{thm}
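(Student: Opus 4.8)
The plan is to exploit the fact that, because the predictor \eqref{eq:predictor_y} is \emph{linear} in $\theta$, the cost \eqref{eq:cost_function} is a quadratic form whose expectation is minimised exactly at $\bar{\theta}$; consistency then follows by combining the uniform convergence of Lemma~\ref{lem:convergence} with the persistency of excitation of Assumption~\ref{asp:persistency_excitation}. The first step is to pin down the prediction error at the true parameter. Under Assumption~\ref{asp:model_structure} the predictor kinetics and exponents coincide with those of the true system for $j\le\nchannels$, and under Assumption~\ref{asp:same_initial_conditions} the initial gating states agree; since both \eqref{eq:true_system_x} and \eqref{eq:predictor_x} are driven by the same measured input $u_{2,k}=v_k$, the predictor states reproduce the true gating variables exactly, $\hat{m}_{j,k}=m_{j,k}$ and $\hat{h}_{j,k}=h_{j,k}$ for $j\le\nchannels$. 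Substituting \eqref{eq:true_parameters} into \eqref{eq:predictor_y} and comparing with \eqref{eq:true_system_y} (the surplus channels $j>\nchannels$ drop out because the corresponding entries of $\bar{\theta}$ vanish) yields $\hat{y}_k(\bar{\theta})=\psi_k\bar{\theta} = y_k + e_k/c$, so the prediction error is affine in $\theta$,
\[
  \varepsilon_k(\theta)=y_k-\psi_k\theta = \psi_k(\bar{\theta}-\theta)-e_k/c .
\]

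Next I would compute the expected cost. Expanding $V_N(\theta)=\tfrac1N\sum_k\varepsilon_k(\theta)^2$ gives a quadratic in $\theta$ whose cross term is proportional to $\tfrac1N\sum_k\psi_k(\bar{\theta}-\theta)e_k$. The key observation is that $\psi_k$ is a function of the deterministic $\vref_{[0,k]}$ and of the noise $e_{[0,k-1]}$ only (since $v_k$, and hence $u_{1,k}$, $u_{2,k}$ and the predictor states, depend on past noise but not on $e_k$); by Assumption~\ref{asp:discrete_input}, $e_k$ is zero-mean and independent of $e_{[0,k-1]}$, so $E[\psi_k\,e_k]=0$ and the cross term vanishes in expectation. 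Writing $R_N=E[\tfrac1N\Psi_N^\top\Psi_N]$, this gives
\[
  E[V_N(\theta)] = (\bar{\theta}-\theta)^\top R_N(\bar{\theta}-\theta) + \sigma_e^2/c^2 .
\]
By Assumption~\ref{asp:persistency_excitation}, $R_N\succ0$ for $N>N^*$, so this quadratic is uniquely minimised at $\theta=\bar{\theta}$, with minimum value $\sigma_e^2/c^2$.

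The third step is to invoke Lemma~\ref{lem:convergence}. Condition~\ref{cond:S3} holds by Lemma~\ref{lem:S3}. To check Condition~\ref{cond:M1} for the model structure \eqref{eq:predictor}, I would apply Lemma~\ref{lem:contraction_expo_stab}: Assumption~\ref{asp:model_structure} makes the predictor dynamics \eqref{eq:predictor_x} exponentially contracting on the positively invariant $[0,1]^{n_{\hat w}}$, and its output map is polynomial in the (bounded) gating states and affine in the (bounded) inputs, so the exponentially fading-memory Lipschitz bound \eqref{eq:condition_M1_A} and the bound \eqref{eq:condition_M1_B} follow, uniformly over the compact $\pardomain$; the gradient $\mathrm{d}\hat{F}_k/\mathrm{d}\theta=\psi_k$ is built from the same contracting states and obeys the analogous bound. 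Lemma~\ref{lem:convergence} then yields $\sup_{\theta\in\pardomain}|V_N(\theta)-E[V_N(\theta)]|\to0$ w.p.~$1$.

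Finally I would transfer the unique minimum from $E[V_N]$ to $V_N$. Because $V_N(\theta)-E[V_N(\theta)]$ is itself quadratic in $\theta-\bar{\theta}$, its uniform smallness over a neighbourhood of $\bar{\theta}$ forces its Hessian and gradient to vanish, i.e.\ $\tfrac1N\Psi_N^\top\Psi_N-R_N\to0$ and $\tfrac1N\Psi_N^\top(e_N,\dots,e_1)^\top\to0$ w.p.~$1$. Using the explicit least-squares solution $\hat{\theta}_N-\bar{\theta}=-\tfrac1c(\tfrac1N\Psi_N^\top\Psi_N)^{-1}(\tfrac1N\Psi_N^\top(e_N,\dots,e_1)^\top)$, the second limit drives the numerator to zero. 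The main obstacle is controlling the inverse: one must show $\liminf_N\lambda_{\min}(R_N)>0$, i.e.\ that the minimiser is \emph{uniformly} well separated in $N$. This is the delicate point, because $E[V_N]$ genuinely depends on $N$ (the excitation $\vref$ need not be stationary), so pointwise positive-definiteness of $R_N$ for each $N$ does not by itself suffice; it must be read as a uniform lower bound on $\lambda_{\min}(R_N)$, which is exactly the strength a uniform persistency-of-excitation interpretation of Assumption~\ref{asp:persistency_excitation} supplies. Granting this, $(\tfrac1N\Psi_N^\top\Psi_N)^{-1}$ is eventually bounded and $\hat{\theta}_N\to\bar{\theta}$ w.p.~$1$.
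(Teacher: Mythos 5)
Your proof follows essentially the same route as the paper: the same decomposition $y^\top_{[1,N]} = \Psi_N\bar\theta - \tfrac{1}{c}e^\top_{[1,N]}$, the same computation showing $E[V_N(\theta)]$ is a positive-definite quadratic in $\bar\theta-\theta$ minimised at $\bar\theta$, and the same verification of Conditions~\ref{cond:S3} and~\ref{cond:M1} via Lemmas~\ref{lem:S3} and~\ref{lem:contraction_expo_stab} in order to invoke Lemma~\ref{lem:convergence}. Your final step is more explicit than the paper's (which simply asserts that uniform convergence together with \eqref{eq:min_expected_cost} gives the result), and your observation that Assumption~\ref{asp:persistency_excitation} must be read as a uniform-in-$N$ lower bound on $\lambda_{\min}\bigl(E[\tfrac{1}{N}\Psi_N^\top\Psi_N]\bigr)$ for the minimiser to be uniformly well separated is a legitimate subtlety that the paper leaves implicit.
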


\begin{proof}
	By Assumptions \ref{asp:model_structure}
	and	\ref{asp:same_initial_conditions}, the true 
	output $y_k$, given by \eqref{eq:true_system_y}, can be
	written as
	\[
		y^\top_{[1,N]} = \Psi_N \bar{\theta} -
		 \tfrac{1}{c}e^\top_{[1,N]}
	\]
	and thus we can write
	\[
		E\left[ V_N(\theta) \right] = 
		\tfrac{1}{N} E \left[ 
		\|\Psi_N 
		(\bar{\theta}-\theta) - 
		\tfrac{1}{c}e^\top_{[1,N]} 
		\|^2\right]
	\]
	By Assumption \ref{asp:discrete_input}, the
	time-delay present in the system ensures that $v_k$ and
	$\w_k$ do not depend on $e_k$. We then have that 
	\[
		E \left[ \Psi_N^\top e^\top_{[1,N]} \right] = 0
	\]
	and thus
	\[
		E\left[ V_N(\theta) \right] = \tfrac{1}{N}
		(\bar{\theta}-\theta)^\top E\left[ 
		\Psi_N^\top \Psi_N \right]
		(\bar{\theta}-\theta) + \tfrac{1}{c}\sigma_e^2
	\]	
	Using Assumption \ref{asp:persistency_excitation}, 
	we have
	\begin{equation}
		\label{eq:min_expected_cost} 
		\arg \min_{\theta\in\mathcal{D}} 
		E\left[ V_N(\theta) \right]
		= \bar{\theta}
	\end{equation}
	for all $N>N^*$.
	
	It remains to show that $\hat{\theta}_N$ converges to
	\eqref{eq:min_expected_cost} w.p. 1 as $N\to\infty$.
	This is done by verifying Conditions 
	\ref{cond:S3} and \ref{cond:M1} of 
	Lemma \ref{lem:convergence}.	Condition \ref{cond:S3} is 
	satisfied due to Lemma \ref{lem:S3}.
	By Assumptions \ref{asp:v_noise} and
	\ref{asp:contracting_true_system}, 
	$(v_k,r_k)$ remains in the bounded set
	$[v_{\min},v_{\max}]\times[-\beta,\beta]$, and thus
	the predictor input \eqref{eq:true_feedback_u} 
	belongs to $\mathcal{U}_{\beta^*}^2$ for some 
	$\beta^*>0$. By Assumptions \ref{asp:contracting_true_system} and
	\ref{asp:same_initial_conditions}, 
	$\hat{w}_0 \in [0,1]^{n_{\hat{w}}}$. It follows
	by Assumption \ref{asp:model_structure} and 
	Lemma \ref{lem:contraction_expo_stab} that the predictor 
	\eqref{eq:predictor} verifies Condition 
	\ref{cond:M1}. Finally, Lemma \ref{lem:convergence} 
	ensures 	that $V_N(\theta)$ converges uniformly to 
    $E\left[V_N(\theta)\right]$ on the compact set 
    $\pardomain$. In view of 
    \eqref{eq:parameter_estimate} and 	
    \eqref{eq:min_expected_cost}, this ensures the result
	of the theorem.
\end{proof}

An immediate consequence of Theorem \ref{thm:consistency} is
that we are able to obtain consistent estimates $\hat{c}$, 
$\hat{\gmax}_j$ and $\hat{\nu}_j$ of the original unknown 
parameters of the system \eqref{eq:true_system_y}, for
$j=0,1,\dotsc,\nchannels$. They can be recovered from
$\hat{c}_N =-1/\hat{\theta}^{(3)}_N$,
$\hat{\nu}_{j,N} = -\hat{\theta}^{(1)}_{j,N}/
	\hat{\theta}^{(2)}_{j,N}$
and $\hat{\gmax}_{j,N} = 
-\hat{\theta}^{(2)}_{j,N}/\hat{\theta}^{(3)}_{j,N}$.

\section{Examples}
\label{sec:examples} 

In this section, we illustrate the results of Section 
\ref{sec:consistency} by identifying various discrete-time
neuronal models. All discrete-time models are obtained by
forward-Euler discretization of their continuous-time
counterparts with $\samplingT = 0.005$ ms.

\begin{exmp}
\label{ex:id_HH} 

In this example, we identify the discrete-time Hodgkin-Huxley
(HH) model %of Example~\ref{ex:HH}. 
\begin{equation*}
	\label{eq:HH_DT} 
	\begin{split}
	\tfrac{v_{k+1}-v_k}{\samplingT} &= 
	-0.3(v_k+54.4) -\sum_{j=1}^2 i_{j,k} 
	+ \gain(\vref_k-v_k) + e_k \\
	i_{1,k} &= 120 \, m_{1,k}^3 h_{1,k} (v_k-55) \\
	i_{2,k} &= 36 \, m_{2,k}^4 (v_k+77)
	\end{split}
\end{equation*}
where the states $m_j$ and $h_j$ are given by the forward-Euler
discretization of \eqref{eq:activation} and
\eqref{eq:inactivation}, respectively, with activation and 
time-constant functions as in Example~\ref{ex:HH}.
We include in the model structure the two ion channel 
kinetics present in the true model, and identify
the values of $c$, $\gmax_j$, and $\nu_j$ using the
parameter vector $\theta$. Comparing the above expression to
\eqref{eq:predictor_y}, we have the true parameters shown in
Table \ref{tab:HH_pars}.

\begin{table}[b]
\centering
\begin{tabular}{|c|c|c|c|c|c|c|}
\hline
$\bar{\theta}_0^{(1)}$ & $\bar{\theta}_0^{(2)}$ &
$\bar{\theta}_1^{(1)}$ & $\bar{\theta}_1^{(2)}$ &
$\bar{\theta}_2^{(1)}$ & $\bar{\theta}_2^{(2)}$ &
$\bar{\theta}^{(3)}$
\\
\hline
$0.3 \cdot 54.4$ & $0.3$ & $120 \cdot -55$ & $120$ &
$36\cdot 77$ & $36$ & $-1$\\
\hline
\end{tabular}
\caption{True parameters of the Hodgkin-Huxley model.}
\label{tab:HH_pars} 
\end{table}

As mentioned in Section \ref{sec:id_setup}, contraction 
of the closed-loop dynamics can be verified empirically.
Figure \ref{fig:HH_testContraction} (top) 
%shows this can be achieved by means of different 
illustrates the contraction observed for a gain 
of $\gamma~=~50$ in a series of step response 
experiments where the reference $\vref_k$ is first set to 
different baseline values, and then stepped to the same 
final value. In the case shown in Figure 
\ref{fig:HH_testContraction} (top), the voltage $v_k$ converges 
to the same steady-state\footnote{Because of input noise, 
the voltage actually oscillates randomly inside a small 
interval.} no matter what the initial value was at 
$t=10\;\mathrm{ms}$. 

To identify the HH model, we simulated a $5$-second long 
data-gathering experiment with the gain $\gain =~50$. The
reference is $\vref_k = -45 + \tilde{\vref}_k$, where 
$\tilde{\vref}_k$ is white Gaussian noise of standard deviation 
$\sigma_r = 100\,\mathrm{mV}$ that is first filtered by the 
zero-order hold discretization of the system $10^2/(s+10)^2$, 
then truncated so that $|\tilde{r}_k|\le 100$ for $k\ge 0$. 
The input noise $e_k$ is white Gaussian noise of $\sigma_e = 2.5 
\, \mathrm{\upmu A/cm^2}$ that is truncated so that $|e_k| \le 20$ 
for $k\ge 0$. This setup resulted in a signal-to-noise ratio 
(between $y_k$ and $e_k$) of around $30.8$ dB.
A $100\;\mathrm{ms}$ sample of the output $v_k$ used for 
identification is shown in Figure \ref{fig:HH_testContraction} 
(bottom). Notice that the step experiments (top)
explore a voltage interval similar to that explored in 
the data-gathering experiment (bottom).

\begin{figure}[t]
	\centering
	\if\usetikz1
	\begin{tikzpicture}
		\begin{axis}[height=4cm,width=8cm,
			ylabel={$v_k\;\mathrm{[mV]}$}, 
			axis y line = left,
			xlabel={$k\samplingT\;\mathrm{[ms]}$}, 
			axis x line = bottom, 		
			xmin=0,xmax=30,
			ymin=-85,ymax=0,
			%legend pos=outer north east,
			each nth point={2},
			%skip coords between index={0}{10000},
			%xtick=\empty,
			%ytick={-77,0,55},
			%yticklabels={$-77$,$0$,$55$},
			name=vclamp_steps
			]			
			\addplot [color=red]
				table[x index=0,y index=2] 
				{./Data/HH_uniquess_20.txt};	
			\addplot [color=green]
				table[x index=0,y index=2] 
				{./Data/HH_uniquess_0.txt};							
			\addplot [color=cyan]
				table[x index=0,y index=2] 
				{./Data/HH_uniquess_-20.txt};
%			\addplot [color=blue]
%				table[x index=0,y index=2] 
%				{./Data/HH_vclamp_40.txt};
			\addplot [color=magenta]
				table[x index=0,y index=2] 
				{./Data/HH_uniquess_-40.txt};
%			\addplot [color=blue]
%				table[x index=0,y index=2] 
%				{./Data/HH_vclamp_60.txt};
			\addplot [color=black]
				table[x index=0,y index=2] 
				{./Data/HH_uniquess_-60.txt};
%			\addplot [color=blue]
%				table[x index=0,y index=2] 
%				{./Data/HH_vclamp_80.txt};
			\addplot [color=blue]
				table[x index=0,y index=2] 
				{./Data/HH_uniquess_-80.txt};
%			\addplot [color=blue]
%				table[x index=0,y index=2] 
%				{./Data/HH_vclamp_100.txt};			
		\end{axis}
		\begin{axis}[height=4cm,width=8cm,
			ylabel={$v_k\;\mathrm{[mV]}$}, 
			axis y line = left,
			xlabel={$k\samplingT\;\mathrm{[ms]}$}, 
			axis x line = bottom, 			
			xmin=100,xmax=200,
			ymin=-85,ymax=0,
			%legend pos=outer north east,
			each nth point={5},
			%skip coords between index={0}{10000},
			%xtick=\empty,
			%ytick={0,1},
			at={(vclamp_steps.below south)},
			anchor=north
			]
			\addplot[color=blue,solid]
				table[x index=0,y index=1] 
					{./Data/HH_iodata.txt};
			\end{axis}
	\end{tikzpicture}
	\else
		\includegraphics[scale=1]{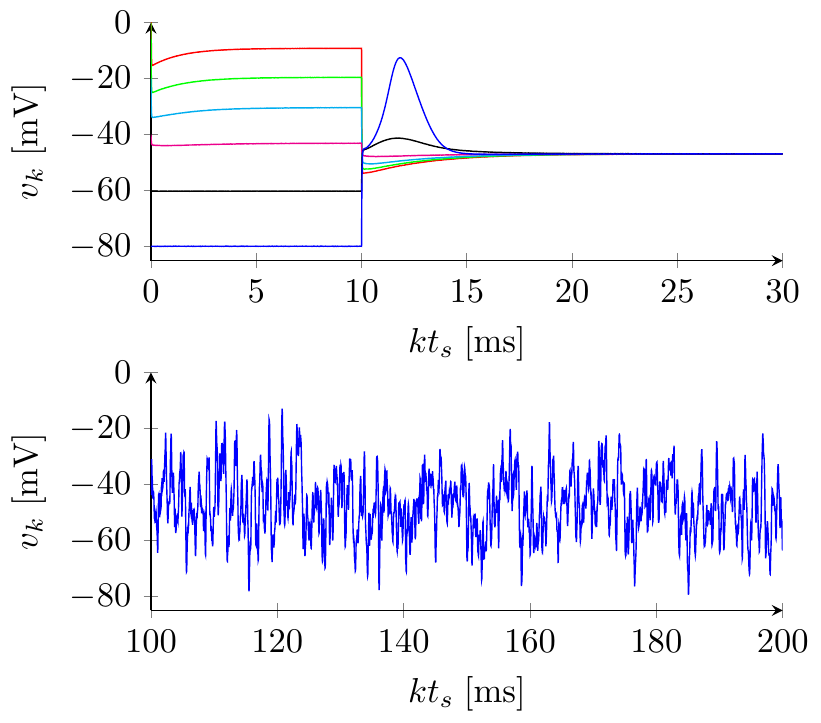}
	\fi
	\caption{Voltage output $v_k$ of the discrete-time
	Hodgkin-Huxley model identified in Example
	\ref{ex:id_HH} subject to different reference inputs $\vref_k$. 
	Top: six experiments in which $\vref_k$ is first set to
	different baseline values ($-80,-60,-40,-20,0$ and $20\;\mathrm{mV}$)
	and then stepped (at $10\;\mathrm{ms}$) to the same
	final value ($-45\;\mathrm{mV}$).
	Bottom: a $100\;\mathrm{ms}$ sample of the voltage output
	used for identification.}
	\label{fig:HH_testContraction}
\end{figure}

To eliminate transient effects and satisfy Assumption 
\ref{asp:same_initial_conditions} as close as possible, we
eliminated the initial $0.5$ seconds of measurement (corresponding
to $10^5$ samples) from all datasets. Figure 
\ref{fig:HH_estimation} shows the resulting estimation error 
$\bar{\theta}-\hat{\theta}_N$ for $N = 10^5$ to $N=9\times 10^5$
($0.5$ to $4.5$ seconds) for 20 different realizations of 
the experiment, as well as their average; we can see from the 
figure that the estimates steadily converge to the true 
parameters.

\end{exmp}

\begin{figure}[t]
	\centering
	\if\usetikz1
	\begin{tikzpicture}
		\begin{groupplot}[
		group style={group size=3 by 3,
			horizontal sep=40pt,
			%xlabels at=edge bottom,
			%x descriptions at=edge bottom
			},
		height=3cm,width=3cm,
		xmode=log,
		ymode=log,
		log base 10 number format code/.code={\pgfmathprintnumber[fixed]{#1}}, 
		axis y line = left,
		xlabel={$N$}, 
		axis x line = bottom,	
		tick label style={font=\tiny},	
		label style={font=\tiny},
		%xmin=-70,xmax=-20,
		%ymin=0,ymax=7.5,
		%xtick={100000,1000000},
		%legend pos=outer north east,
		]
		
%		\pgfplotsinvokeforeach{1,2,...,12}{	
%			\nextgroupplot[ylabel=
%				{\color{red}error #1}]
%				%$\bar{\theta}_{#1}-	\hat{\theta}_{#1}$]
%			\foreach \m in {3,4,...,51}{
%			\addplot [thick,color=gray,opacity=0.1]
%				table[x index=0,y index=\m] 
%				{./Data/theta#1.txt};
%				}
%			\addplot [thick,color=blue]
%				table[x index=0,y index=1] 
%				{./Data/theta#1.txt};
% Par 1
			\nextgroupplot[ylabel=
		{$|\bar{\theta}_0^{(1)}-\hat{\theta}_0^{(1)}|$},
						ymin=0.01]
			\foreach \m in {3,4,...,21}{
			\addplot [thick,color=gray,opacity=0.1]
				table[x index=0,y index=\m] 
				{./Data/HH_theta1.txt};
				}
			\addplot [thick,color=blue]
				table[x index=0,y index=1] 
				{./Data/HH_theta1.txt};
% Par 2
			\nextgroupplot[ylabel=
		{$|\bar{\theta}_0^{(2)}-\hat{\theta}_0^{(2)}|$},
						ymin=0.0001]
			\foreach \m in {3,4,...,21}{
			\addplot [thick,color=gray,opacity=0.1]
				table[x index=0,y index=\m] 
				{./Data/HH_theta2.txt};
				}
			\addplot [thick,color=blue]
				table[x index=0,y index=1] 
				{./Data/HH_theta2.txt};
% Par 3
			\nextgroupplot[ylabel=
		{$|\bar{\theta}_1^{(1)}-\hat{\theta}_1^{(1)}|$},
						ymin=0.1]
			\foreach \m in {3,4,...,21}{
			\addplot [thick,color=gray,opacity=0.1]
				table[x index=0,y index=\m] 
				{./Data/HH_theta3.txt};
				}
			\addplot [thick,color=blue]
				table[x index=0,y index=1] 
				{./Data/HH_theta3.txt};
% Par 5
			\nextgroupplot[ylabel=
		{$|\bar{\theta}_1^{(2)}-\hat{\theta}_1^{(2)}|$},
							ymin=0.001]
			\foreach \m in {3,4,...,21}{
			\addplot [thick,color=gray,opacity=0.1]
				table[x index=0,y index=\m] 
				{./Data/HH_theta5.txt};
				}
			\addplot [thick,color=blue]
				table[x index=0,y index=1] 
				{./Data/HH_theta5.txt};
% Par 4
			\nextgroupplot[ylabel=
		{$|\bar{\theta}_2^{(1)}-\hat{\theta}_2^{(1)}|$},
						ymin=0.1]
			\foreach \m in {3,4,...,21}{
			\addplot [thick,color=gray,opacity=0.1]
				table[x index=0,y index=\m] 
				{./Data/HH_theta4.txt};
				}
			\addplot [thick,color=blue]
				table[x index=0,y index=1] 
				{./Data/HH_theta4.txt};
% Par 6
			\nextgroupplot[ylabel=
		{$|\bar{\theta}_2^{(2)}-\hat{\theta}_2^{(2)}|$},
						ymin=0.001]
			\foreach \m in {3,4,...,21}{
			\addplot [thick,color=gray,opacity=0.1]
				table[x index=0,y index=\m] 
				{./Data/HH_theta6.txt};
				}
			\addplot [thick,color=blue]
				table[x index=0,y index=1] 
				{./Data/HH_theta6.txt};
% Par 7
			\nextgroupplot[ylabel=
		{$|\bar{\theta}^{(3)}-\hat{\theta}^{(3)}|$}]
			\foreach \m in {3,4,...,21}{
			\addplot [thick,color=gray,opacity=0.1]
				table[x index=0,y index=\m] 
				{./Data/HH_theta7.txt};
				}
			\addplot [thick,color=blue]
				table[x index=0,y index=1] 
				{./Data/HH_theta7.txt};
				
		\end{groupplot}
	\end{tikzpicture}
	\else
		\includegraphics[scale=1]{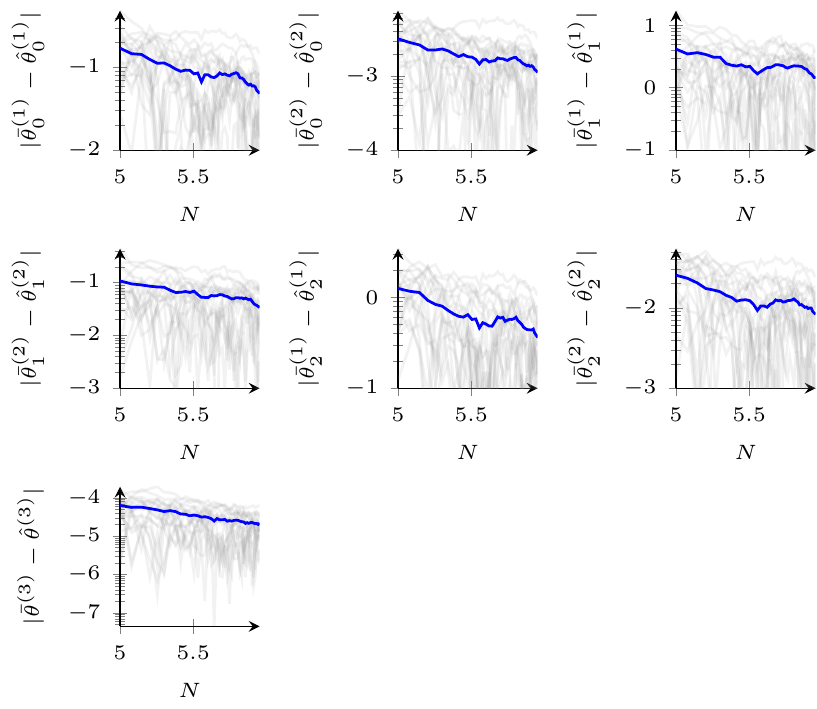}
	\fi
	\caption{The $\log_{10} \times \log_{10}$ plots above
	show how the errors in the estimated parameters of
	Example \ref{ex:id_HH} fall as the number of data
	points $N$ increases. In grey: errors in each of the 
	20 realizations of the identification experiment as
	computed for $N=10^5$ to $9\times 10^5$ ($\samplingT = 
	0.005$). In blue: average of the 20 error traces.}
	\label{fig:HH_estimation}
\end{figure}

\begin{exmp}
\label{ex:CS} 

In this example, we illustrate how a library of pre-established
set of ion channel kinetic models can be used to identify
different neuronal models. We consider three models, all of which
are based on the system given by
\begin{equation*}
	\begin{split}
	\tfrac{v_{k+1}-v_k}{\samplingT} &= 
	-0.3(v_k+17) -\sum_{j=1}^4 i_{j,k} 
	+ \gain(\vref_k-v_k) + e_k \\
		i_{1,k} &= 120 \, m_{1,k}^3 \, h_{1,k}
		(v_k-55)\\
		i_{2,k} &= 20 \, m_{2,k}^4
		(v_k+75)\\
		i_{3,k} &= \gmax_3 \, m_{3,k}^3 \, h_{3,k}
		(v_k+75)\\
		i_{4,k} &= \gmax_4 \, m_{4,k}^2
		(v_k-120)\\
	\end{split}
\end{equation*}
where the states $m_j$ and $h_j$ are given by the forward-Euler
discretization of \eqref{eq:activation} and
\eqref{eq:inactivation}, respectively. The functions 
$m_{\infty,j}$, $h_{\infty,j}$, $\tau_{m,j}$ and $\tau_{h,j}$
are plotted in Figure \ref{fig:CS_kinetic_functions}, and are
described in Appendix \ref{app:cs_kinetics}.

The above system, taken from \cite{drion_ion_2015},
defines a modified version of the Connor-Stevens  
neuronal model \cite{connor_neural_1977}. The values of the
variables $\gmax_3$ and $\gmax_4$ are the distinguishing
factors between the three models we use in
this example. We call them Connor-Stevens (CS) models A, B,
and C, according to the maximal conductance values found
in Table \ref{tab:CS_pars}.

\begin{table}
\centering
\begin{tabular}{|c|c|c|c|}
\hline 
CS model & A & B & C \\ 
\hline 
$\gmax_3$ & 0 & 90 & 0 \\ 
\hline 
$\gmax_4$ & 0 & 0 & 0.4\\ 
\hline 
\end{tabular}
\caption{True maximal conductances in CS models A, B, and C.}
\label{tab:CS_pars}  
\end{table}

Connor Stevens model A is similar to the HH model
of the previous example, while models B and C differ
from A due to the addition of ion currents $i_3$ and $i_4$,
respectively (these currents represent an ``A-type'' 
potassium current and a calcium current, respectively). It
can be verified through
simulations that the addition of $i_3$ or $i_4$ makes the
qualitative input-output behavior (from $\Iapp$ to $v$) of
models B and C differ from that of model A.
In particular,   models B and C can   fire 
periodic spikes with arbitrarily low frequency, while model
A does not have that property (see, for instance, Figure 2
of \cite{drion_ion_2015}). The property of spiking with
arbitrarily low frequency has important neurocomputational
consequences. It underlies the classical distinction between Type I and Type II  
neuronal excitability first proposed by Hodgkin and
Huxley (see \cite[Chapter 7]{izhikevich_dynamical_2007}).

\begin{figure}[b]
	\centering
	\if\usetikz1
	\begin{tikzpicture}
		\begin{axis}[height=4.5cm,
		width=4.5cm,
		axis y line = left,
		xlabel={$v \mathrm{[mV]}$}, axis x line = bottom,
		ymin=0,ymax=5,
		ytick={0,2.5,5},
		legend pos= north east,%outer north east,
		name = CS_tau
		]
			\addplot[smooth,thick,color=blue] 
				file{./Data/tau_m_cs.txt};
			\addlegendentry{\tiny $\tau_{m,1}$};
			\addplot[smooth,thick,color=red] 
				file{./Data/tau_h_cs.txt};
			\addlegendentry{\tiny $\tau_{h,1}$};
			\addplot[smooth,thick,color=green] 
				file{./Data/tau_n_cs.txt};
			\addlegendentry{\tiny $\tau_{m,2}$};
			\addplot[smooth,thick,color=cyan] 
				file{./Data/tau_a_cs.txt};
			\addlegendentry{\tiny $\tau_{m,3}$};
			\addplot[smooth,thick,color=magenta] 
				file{./Data/tau_b_cs.txt};
			\addlegendentry{\tiny $\tau_{h,3}$};
			\addplot[smooth,thick,color=black] 
				file{./Data/tau_c_cs.txt};
			\addlegendentry{\tiny $\tau_{m,4}$};
		\end{axis}
		\begin{axis}[height=4.5cm,
		width=4.5cm,
		axis y line = left,
		xlabel={$v \mathrm{[mV]}$}, axis x line = bottom,
		ymin=0,ymax=1,
		%legend pos=south east,
		legend style={at={(0.6,0.45)},anchor=west},
		ytick={0,1},
		at={(CS_tau.north east)},
		anchor = north west,
		xshift = 1.25cm
		]
			\addplot[smooth,thick,color=blue] 
				file{./Data/sig_m_cs.txt};
				\addlegendentry{\tiny $m_{\infty,1}$}; 
			\addplot[smooth,thick,color=red] 
				file{./Data/sig_h_cs.txt};
				\addlegendentry{\tiny $h_{\infty,1}$};
			\addplot[smooth,thick,color=green] 
				file{./Data/sig_n_cs.txt};
				\addlegendentry{\tiny $m_{\infty,2}$};
			\addplot[smooth,thick,color=cyan] 
				file{./Data/sig_a_cs.txt};
				\addlegendentry{\tiny $m_{\infty,3}$};
			\addplot[smooth,thick,color=magenta] 
				file{./Data/sig_b_cs.txt};
				\addlegendentry{\tiny $h_{\infty,3}$};
			\addplot[smooth,thick,color=black] 
				file{./Data/sig_c_cs.txt};
				\addlegendentry{\tiny $m_{\infty,4}$};
		\end{axis}
	\end{tikzpicture}
	\else
		\includegraphics[scale=1]{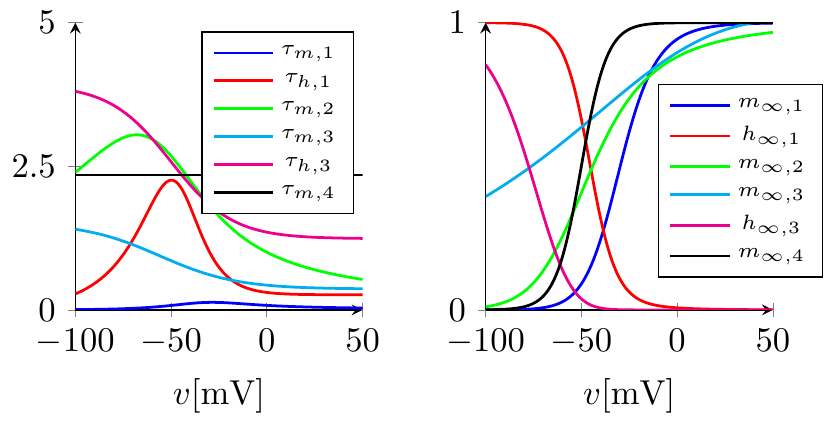}
	\fi
	\caption{Left: time constant functions 
	$\tau_{m,j}$ and $\tau_{h,j}$ in the Connor-Stevens
	model. Right: nonlinear activation functions
	$m_{\infty,j}$ and $h_{\infty,j}$ in the Connor-Stevens
	model.}
	\label{fig:CS_kinetic_functions}
\end{figure}

To identify the models A, B and C, we include
in a single model structure all four of the ion channels
shared by those models. 
We simulated identification experiments in which $\gain = 50$
and $\vref_k = -45 + \tilde{\vref}_k$, where $\tilde{\vref}_k$ is
white Gaussian noise of standard deviation $\sigma_r = 30 \, \mathrm{mV}$
that is first filtered by the zero-order hold discretization of the 
system $10^2/(s+10)^2$, then truncated so that $|\tilde{\vref}_k| \le 30$
for $k\ge 0$. The input noise $e_k$ is white Gaussian noise of 
$\sigma_e = 1 \, \mathrm{\upmu A/cm^2}$ that is truncated so that 
$|e_k| \le 20$ for $k\ge 0$. This setup resulted in a signal-to-noise ratio
(between $y_k$ and $e_k$) of around $28$ dB, $26$ dB and $29$ dB for the CS
models A, B and C, respectively (again,we eliminated the first $0.5$ seconds
of measurement from all datasets).

\begin{figure}[ht]
	\centering
	\if\usetikz1
	\begin{tikzpicture}
		\begin{groupplot}[
		group style={group size=2 by 2,
			horizontal sep=40pt,
			%xlabels at=edge bottom,
			%x descriptions at=edge bottom
			},
		height=4.0cm,width=4.0cm,
		%xmode=log,
		%ymode=log,
		%log base 10 number format code/.code={\pgfmathprintnumber[fixed]{#1}}, 
		axis y line = left,
		xlabel={$N$}, 
		axis x line = bottom,	
		tick label style={font=\scriptsize},	
		label style={font=\scriptsize},
		%xmin=-70,xmax=-20,
		%ymin=0,ymax=7.5,
		%xtick={100000,1000000},
		%legend pos=outer north east,
		]
% Na
			\nextgroupplot[ylabel={$\gmax_1$},ymin=118,
			ymax=122]
			\addplot [thick,color=blue]
				table[x index=0,y index=1] 
				{./Data/idconds_drion2015a.txt};
			\addplot [thick,color=red]
				table[x index=0,y index=1] 
				{./Data/idconds_drion2015b.txt};
			\addplot [thick,color=green]
				table[x index=0,y index=1] 
				{./Data/idconds_drion2015e.txt};
			\addplot[dashed,black,domain=20000:500000]
					{120};
% K
			\nextgroupplot[ylabel=
						{$\gmax_2$},ymin=18]%
			\addplot [thick,color=blue]
				table[x index=0,y index=2] 
				{./Data/idconds_drion2015a.txt};
			\addplot [thick,color=red]
				table[x index=0,y index=2] 
				{./Data/idconds_drion2015b.txt};
			\addplot [thick,color=green]
				table[x index=0,y index=2] 
				{./Data/idconds_drion2015e.txt};
			\addplot[dashed,black,domain=20000:500000]{20};
% A
			\nextgroupplot[ylabel=
						{$\gmax_3$},ymax=100]%,ymin=0.01
			\addplot [thick,color=blue]
				table[x index=0,y index=3] 
				{./Data/idconds_drion2015a.txt};
			\addplot [thick,color=red]
				table[x index=0,y index=3] 
				{./Data/idconds_drion2015b.txt};
			\addplot [thick,color=green]
				table[x index=0,y index=3] 
				{./Data/idconds_drion2015e.txt};
			\addplot[dashed,black,domain=20000:500000]{90};
			\addplot[dashed,black,domain=20000:500000]{0};
% Ca
			\nextgroupplot[ylabel=
						{$\gmax_4$},ymax=0.5]%
			\addplot [thick,color=blue]
				table[x index=0,y index=4] 
				{./Data/idconds_drion2015a.txt};
			\addplot [thick,color=red]
				table[x index=0,y index=4] 
				{./Data/idconds_drion2015b.txt};
			\addplot [thick,color=green]
				table[x index=0,y index=4] 
				{./Data/idconds_drion2015e.txt};
			\addplot[dashed,black,domain=20000:500000]
				{0.4};
			\addplot[dashed,black,domain=20000:500000]{0};

		\end{groupplot}
	\end{tikzpicture}
	\else
		\includegraphics[scale=1]{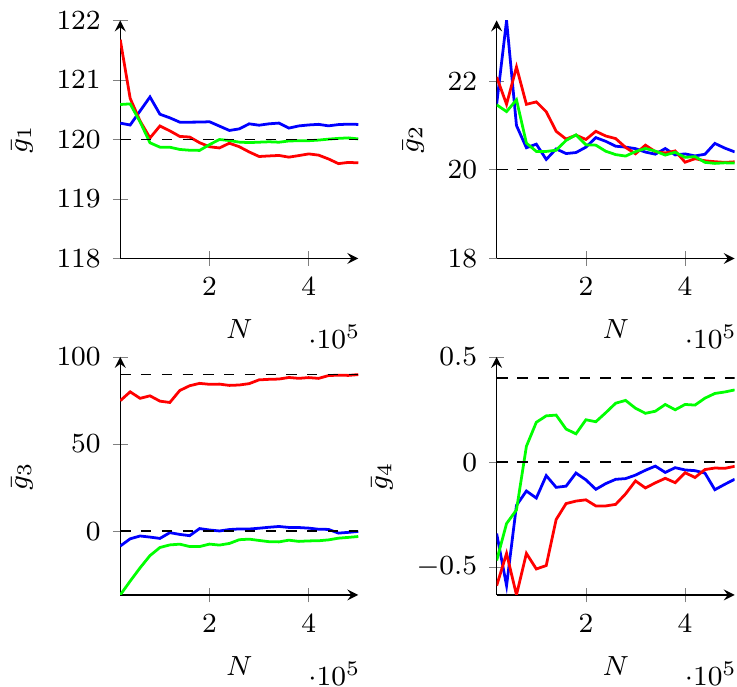}
	\fi
	\caption{Evolution of the estimates of $\gmax_j$,
	with respect to the number of samples,
	for each of the identified Connor-Stevens models A
	(blue),	B (red) and C (green). The sampling period is 
	$\samplingT = 0.005$, and the experimental setup
	is described in Example \ref{ex:CS}.}
	\label{fig:CS_estimation}
\end{figure}

Figure \ref{fig:CS_estimation} shows the evolution of
the estimates of $\gmax_j$ obtained by identifying each
of the CS models A, B and C (for brevity, we do not show
the evolution of all parameter estimates). It can be seen
that the estimates of $\gmax_3$ (or $\gmax_4$) for models
that do not contain $i_3$ (or $i_4$) tend towards zero,
while the other estimates tend towards their true values.

\end{exmp}

\section{Conclusion}
\label{sec:conclusion} 

In this paper, we studied the identification of 
discrete-time neuronal systems under the assumption of 
current-additive zero-mean white noise and negligible 
voltage measurement noise. We showed that by treating 
a neuronal model as a closed-loop system, we can solve 
the identification problem by identifying the inverse 
dynamics with an output-error model structure. We have 
demonstrated that consistent parameter estimates are 
obtained when the model structure contains the internal
dynamics of the system being identified. This is a common
strategy adopted in neuroscience, where kinetic models
of ion channels are estimated in separate experiments
(see, e.g., \cite{mcdougal_twenty_2017}).
It is worth noting that the results in this paper may hold
for ion channel models which are more general than
\eqref{eq:activation}-\eqref{eq:inactivation};
the key requirement is that the ion channels
possess a contracting dynamics, so that
\eqref{eq:contracting_internal} is satisfied.
Thus, this work rigorously justifies neuronal system
identification using conventional methods of nonlinear 
identification.

\begin{ack}                               
Thiago Burghi was supported by the 
\textit{Coordena\c{c}\~{a}o de Aperfei\c{c}oamento 
de Pessoal de N\'{i}vel Superior} (CAPES) -- Brasil 
(Finance Code 001).
%Brazilian federal 
%agency for the Coordination of Improvement of Higher Education 
%Personnel (CAPES). 
Maarten Schoukens was supported by the 
European Union's Horizon 2020 research and innovation 
programme under the Marie Sklodowska-Curie Fellowship 
(grant agreement nr. 798627). The research leading to 
these results has received funding from the European 
Research Council under the Advanced ERC Grant Agreement 
Switchlet n.670645. The authors thank the anonymous 
reviewers, as well as Dr. Monika Josza, for helping to
improve earlier versions of this manuscript.
\end{ack}

\appendix
\section{Proofs}
\subsection{Proof of Lemma \ref{lem:contraction_expo_stab}}
\label{proof:contraction_expo_stab} 

Let $P = \Theta^\top\Theta$, where $\Theta > 0$. Applying the
change of coordinates $z_k = \Theta x_k$, we obtain the 
discrete-time dynamics
\begin{equation}
	\label{eq:z_dynamics}
	z_{k+1} = f_\Theta(z_k,u_k),
\end{equation}
where $f_\Theta$ is given by 
\begin{equation}
	\label{eq:zeta_dynamics} 	
	f_\Theta(\zeta,\upsilon) = \Theta 
	f\left(\Theta^{-1}\zeta,\upsilon\right).
\end{equation}
By the assumptions on $X$, the set
\[	
	Z = \{\zeta \in \setreal^\nstates \;|
	\;\zeta=\Theta\xi,\;\xi \in X\}.
\]
is closed, bounded and convex. Furthermore, $Z$ is a positively
invariant set for \eqref{eq:z_dynamics}, uniformly in 
$[-\beta,\beta]^\ninputs$.

Since $P = \Theta^\top\Theta$ with $\Theta$ invertible, the
inequality \eqref{eq:DT_contraction} implies
\begin{equation*}
	%\label{eq:max_sing_value} 
	\sigma_{\max}\left[\Theta
	\frac{\partial f}{\partial x}(x,u)
		\Theta^{-1} \right] \le \alpha < 1
\end{equation*}
for all $k \in \setint_+$, $x \in X$, and $u \in U$.
%The inequality \eqref{eq:max_sing_value} 
From \eqref{eq:zeta_dynamics}, this implies that 
%the induced $2$-norm of $\partial f_\Theta/\partial \zeta$ %satisfies $\|\partial f_\Theta/\partial \zeta\|_2\le \alpha$ 
$\sigma_{\max}\left[\partial f_\Theta/
		\partial\zeta\right]\le \alpha$ on
$Z\times[-\beta,\beta]^\ninputs$. 
Furthermore, since $\partial f_\Theta/ \partial \upsilon$
is a continuous function and
$Z~\times~[-\inpbound,\inpbound]^\ninputs$ is closed
and bounded, there is some $L_1>0$ such that 
%$\|\partial f_\Theta/\partial \upsilon\|_2\le~L_1$ 
$\sigma_{\max}\left[\partial f_\Theta/
		\partial \upsilon\right]\le L_1$
%for all $(\zeta,\upsilon) \in 
on $Z \times [-\inpbound,\inpbound]^\ninputs$.

Now, let $\zeta,\tilde{\zeta} \in Z$ and
$\upsilon,\tilde{\upsilon} \in 
[-\inpbound,\inpbound]^\ninputs$. Let also 
$\gamma_1(s)~=~(1-s)\tilde{\zeta} + s \zeta$ and
$\gamma_2(s) = (1-s)\tilde{\upsilon} + s \upsilon$,
with $s \in [0,1]$. It can be shown, using the mean
value theorem (see, e.g., the proof of 
\cite[Lemma 3.1]{khalil_nonlinear_2002}), that 
there is an $s^* \in (0,1)$ such that 
\begin{equation*}
\resizebox{\linewidth}{!}{$
	\begin{aligned}
	\|f_\Theta(\zeta,\upsilon)-
		f_\Theta(\tilde{\zeta},\tilde{\upsilon})\|
	\le 
	&\left\| \frac{\partial f_\Theta}{\partial \zeta}
		(\gamma_1(s^*),\gamma_2(s^*))
	(\zeta-\tilde{\zeta})\right. 
	\\
	&\quad+ 
	\left.\frac{\partial f_\Theta}{\partial \upsilon}
		(\gamma_1(s^*),\gamma_2(s^*))
	(\upsilon-\tilde{\upsilon})	\right\|
	\end{aligned}
$}
\end{equation*}

By the triangle inequality and convexity of 
$Z \times [-\inpbound,\inpbound]^\ninputs$, the above implies
\begin{equation}
	\label{eq:lipschitz_inequality} 
	\|f_\Theta(\zeta,\upsilon) - f_\Theta(\tilde{\zeta},
	\tilde{\upsilon})\|  
	\le \alpha \|\zeta-\tilde{\zeta}\| + 
	L_1 \|\upsilon-\tilde{\upsilon}\|
\end{equation}
on $Z \times [-\inpbound,\inpbound]^\ninputs$.
By positive invariance of $Z$, we are allowed to apply 
\eqref{eq:z_dynamics} and \eqref{eq:lipschitz_inequality}
recursively, obtaining
\begin{equation*} 
\label{eq:expo_stable_z} 
\resizebox{\linewidth}{!}{$
	\begin{aligned}
		\|z_k - \tilde{z}_k\|
		&\le L_1
		\sum_{m=1}^{k} \alpha^{m-1}
		\|u_{k-m} - \tilde{u}_{k-m}\|
		+  \alpha^k \|z_0 - \tilde{z}_0\|
	\end{aligned}
$}
\end{equation*}
for $k\ge 0$. Multiplying both sides of the
inequality by $\sigma_{\max}\left[\Theta^{-1}\right]$ and 
substituting $z_k = \Theta x_k$, we have
\begin{equation}
	\label{eq:expo_stable_states} 
	\begin{split}
		\|x_k - \tilde{x}_k\|
		\le& \frac{L_1}{\sigma_{\min}}
		\sum_{m=1}^{k} \alpha^{m-1}
		\|u_{k-m} - \tilde{u}_{k-m}\|\\
		&+ \frac{\sigma_{\max}}{\sigma_{\min}}\;
		\alpha^k \;\|x_0 - \tilde{x}_0\|
	\end{split}
\end{equation}
for $k\ge 0$, where $\sigma_{\max}$ and $\sigma_{\min}$
denote the largest and the smallest singular values of 
$\Theta$, respectively.

By arguments similar to those used above, 
%$\Iint(\xi,\upsilon)$ is Lipschitz on $X 
%\times [-\inpbound,\inpbound]^\ninputs$, and so 
there are $L_2,L_3>0$ such that 
\begin{equation}
	\label{eq:lipschitz_output} 
	\|y_k - \tilde{y}_k\| \le 
	L_2 \|x_k - \tilde{x}_k\|  + L_3 \|u_k-\tilde{u}_k\|
\end{equation}
The result \eqref{eq:expo_stable_output} follows 
directly from \eqref{eq:expo_stable_states} and 
\eqref{eq:lipschitz_output} by setting
$C_1 = \max\{L_1L_2\sigma_{\min}^{-1},L_3\}$
and $C_2 = L_2 \sigma_{\max}\sigma_{\min}^{-1}$.

\subsection{Proof of Proposition \ref{prop:DT_contraction}}
\label{proof:DT_contraction}

To prove Proposition \ref{prop:DT_contraction}, we first
state a result concerning the contraction of forward-Euler
discretized systems:
\begin{lem}
	\label{lem:discretized_contracting} 
	Consider the continuous-time dynamics
	\begin{equation}
		\label{eq:aux_CT_system} 
		\dot{x}(t) = f(x(t),u(t)) + B_r \,r(t),
	\end{equation}
	where $x(t)\in\setreal^{\nstates}$, 
	$u(t)\in \setreal^{\ninputs}$, 
	$r(t)\in\setreal^{\nref}$,
    $B_r$ is a constant matrix, and $f$ is 
    continuously differentiable. 
    Assume \eqref{eq:aux_CT_system} is exponentially 
    contracting in a set $X$, uniformly in $(u,r)$ on 
    $\setreal^{\ninputs+\nref}$, with constant  
    $P>0$ and $\lambda>0$. Assume $\partial f/\partial x$ is bounded 
    on $X \times \setreal^{\ninputs}$. 
	%uniformly on $[-\beta,\beta]^{\ninputs}$ 
	Let
	\begin{equation}
		\label{eq:aux_DT_system}
		\begin{split}
			x_{k+1} &= f_d(x_k,u_k,d_k) \\
			&:= x_k + \samplingT (f(x_k,u_k) + B_d \, d_k) 
		\end{split}
	\end{equation}
	where $\samplingT>0$ is a sampling period, $d_k\in\setreal^{n_d}$, 
	and $B_d$ is a constant matrix. Then, there exists a sufficiently 
	small $\samplingT$ such that \eqref{eq:aux_DT_system} is 
	exponentially contracting in $X$, uniformly in $(u,d)$ on
	$\setreal^{\ninputs+n_d}$.
\end{lem}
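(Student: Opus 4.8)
The plan is to verify the discrete-time contraction inequality \eqref{eq:DT_contraction} directly, retaining the constant metric $P$ inherited from the continuous-time system. First I would compute the state Jacobian of the forward-Euler map \eqref{eq:aux_DT_system}. Since the forcing term $B_d d_k$ carries no state dependence,
\[
\frac{\partial f_d}{\partial x}(x,u,d) = I + \samplingT \frac{\partial f}{\partial x}(x,u).
\]
Note that this Jacobian is independent of $d$, so uniformity in $d$ will be automatic, and the input $u$ enters only through $\partial f/\partial x$.

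Writing $J = \partial f/\partial x$ for brevity, the next step is to expand the left-hand side of \eqref{eq:DT_contraction}:
\[
(I + \samplingT J)^\top P (I + \samplingT J) = P + \samplingT (J^\top P + P J) + \samplingT^2 J^\top P J.
\]
Because $P$ is constant, the continuous-time condition \eqref{eq:CT_contraction} applied to \eqref{eq:aux_CT_system} reduces to $J^\top P + P J \le -2\lambda P$ on $X$, uniformly in $(u,r)$; substituting this bound controls the term linear in $\samplingT$.

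The key step is to dominate the quadratic remainder $\samplingT^2 J^\top P J$ by a multiple of $P$. Here I would invoke the assumed boundedness of $\partial f/\partial x$ on $X \times \setreal^{\ninputs}$: there is an $M > 0$ with $J^\top P J \le M I$ uniformly over this set, and since the metric satisfies $P \ge \epsilon I$, this yields $J^\top P J \le (M/\epsilon) P =: c_2 P$. Combining the three estimates gives
\[
(I + \samplingT J)^\top P (I + \samplingT J) \le (1 - 2\lambda \samplingT + c_2 \samplingT^2)\, P.
\]
It then suffices to set $\alpha^2 := 1 - 2\lambda \samplingT + c_2 \samplingT^2$ and choose $\samplingT$ small, which forces $\alpha \in (0,1)$ for every $0 < \samplingT < 2\lambda/c_2$; this establishes \eqref{eq:DT_contraction} with rate $\alpha$, uniformly in $(u,d)$.

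I expect the only delicate point to be this quadratic remainder: without a uniform bound on the Jacobian the term $\samplingT^2 J^\top P J$ could not be absorbed into $P$, which is precisely why boundedness of $\partial f/\partial x$ is hypothesized in the statement. Everything else is a completion-of-the-square argument in the sampling period, and the uniformity assertions follow because the constants $\lambda$, $\epsilon$ and $M$ are themselves uniform in $(u,r)$, while $\partial f_d/\partial x$ does not depend on $d$ at all.
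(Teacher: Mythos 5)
Your proposal is correct and follows essentially the same route as the paper's proof: expand $(I+\samplingT J)^\top P (I+\samplingT J)$, absorb the linear term via the continuous-time inequality $J^\top P + PJ \le -2\lambda P$, and dominate the quadratic remainder by $\samplingT^2\,(\lambda_{\max}[P]/\lambda_{\min}[P])\,\bar\sigma^2 P$ using the uniform Jacobian bound, then shrink $\samplingT$. Your constant $c_2 = M/\epsilon$ is exactly the paper's $\lambda_{\max}[P]\bar\sigma^2/\lambda_{\min}[P]$, so there is nothing to add.
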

\begin{proof}
%See the Appendix \ref{proof:discretized_contracting}.
Since $\partial f/\partial x$ is bounded on 
$X \times \setreal^{\ninputs}$, there is a number $\bar{\sigma}$ 
such that $\bar{\sigma}\ge\sigma_{\max}[\partial f/\partial x]$
on that set. Using contraction of the continuous-time system, 
we have
\begin{equation}
	\label{eq:proof_discretized_contraction} 
	\begin{split}
		\frac{\partial f_d^\top}{\partial x }P
		\frac{\partial f_d}{\partial x } 
		&= \left(I + \samplingT \frac{\partial f^\top}
							{\partial x}\right)
		P \left(I + \samplingT \frac{\partial f}
							{\partial x}\right)\\
		&\le (1 - 2\samplingT\lambda) P + \samplingT^2 
		\frac{\partial f^\top}{\partial x}P
		\frac{\partial f}{\partial x} \\
		&\le \left(1-2\samplingT\lambda + \samplingT^2 
		\frac{\lambda_{\max}[P]}{\lambda_{\min}[P]}
		\bar{\sigma}^2
			\right) P \\
			&= \alpha(\samplingT)^2 P
	\end{split}
\end{equation}
for all $x \in X$ and $u\in\setreal^{\ninputs}$. The second 
inequality above follows from the fact that 
$A^\top P A \le \lambda_{\max}[P] \sigma^2_{\max}[A] I$ 
and $I \le 1/\lambda_{\min}[P] P$.
Making $\samplingT<1$ small enough ensures that 
$\alpha(\samplingT)^2 < 1$, concluding the proof.
%note that x^Tx \le x^\top P x / \lambda_{\min}[P]
\end{proof}

We now carry on with the proof Proposition 
\ref{prop:DT_contraction}. The discretization of
\eqref{eq:activation} is given by
\begin{equation*}
\begin{split}
	m_{j,k+1} &= m_{j,k} + \tfrac{\samplingT}{\tau_{m,j}(v_k)}
	(-m_{j,k} + m_{\infty,j}(v_k))\\
			&= m_{j,k}
			\left(1-\tfrac{\samplingT}{\tau_{m,j}(v_k)}\right)
			+ \tfrac{\samplingT}{\tau_{m,j}(v_k)}
			m_{\infty,j}(v_k)
\end{split}
\end{equation*}
where $m_{\infty,j}(v_k)\in[0,1]$ and 
$\tau_{m,j}(v_k)\in[\tau_{\min},\tau_{\max}]$, with 
$\tau_{\min}~>~0$. It directly follows that for any
$\samplingT \le \tau_{\min}$, for all $m_{j,k}\in[0,1]$, and for
all $v_k\in\setreal$, we have $m_{j,k+1}\in~[0,1]$. An analogous
fact holds for $h_{j,k}$. Thus for any 
$\samplingT \le \tau_{\min}$, $w_0 \in~[0,1]^\ngat$ implies 
$\w_k \in [0,1]^\ngat$ for all $k\ge 0$, and $[0,1]^{\ngat}$
is positively invariant for the subsystem 
\eqref{eq:DT_int_dynamics}, uniformly in $v$ on $\setreal$.
Since $A(v)=-\mathrm{diag}(1/\tau_{m,1}(v),\dotsc)$ is bounded 
on $\setreal$, Proposition \ref{prop:contractive} together 
with Lemma \ref{lem:discretized_contracting} imply the
existence of a $t_s \le \tau_{\min}$ such that 
\eqref{eq:DT_int_dynamics} is exponentially contracting
in $[0,1]^{\ngat}$, uniformly on $\setreal$.

%For some $\mu\in(0,1)$, let 
%$\samplingT = \mu \min\{\tau_{\min},1/\gamma\}$. Let
Now, let $V_{\gamma}\subset \setreal$ be the interval
	\begin{equation*}
		\label{eq:V_domain} 
		V_{\gamma} = \left[
		\min_j \{\nu_j, 
		-\beta\tfrac{\gain+1}{\gain}\},
		\max_j \{\nu_j, 
		\beta\tfrac{\gain+1}{\gain}\}
		\right]
	\end{equation*}
Let $\samplingT^*(\gamma) = 
		\min\{\tau_{\min},1/\gamma\}$, and let
\[
	v_{\max}(\gamma) = 
	\max_{\substack{v_k,\w_k,\\\vref_k,e_k}}
	 v_k - 
	 \frac{\samplingT^*(\gamma)}{c}(\Iint(v_k,\w_k)- 
	 \gamma(\vref_k-v_k)-e_k)
	 %f_v(v_k,\w_k,\vref_k,e_k)
\]
where the maximum is over the closed and bounded set
$V_{\gamma}\times[0,1]^{\ngat} \times[-\beta,\beta]^2$.
Defining $v_{\min}(\gamma)$ analogously,
we claim that for
all $\gamma>0$, $[v_{\min}(\gamma),v_{\max}(\gamma)]
\times[0,1]^\ngat$ is a positively invariant set for
\eqref{eq:DT_CL_model}, uniformly on $[-\beta,\beta]^2$.
%and let $V_\gamma' = [v_{\min}(\gamma),v_{\max}(\gamma)]$. 
To prove this claim, we first observe that 
$v_{k+1} \in [v_{\min}(\gamma),v_{\max}(\gamma)]$ %V_\gamma'$ 
whenever 
$(v_{k},\w_{k}^\top)\in V_{\gamma}\times[0,1]^{\ngat}$
and $(\vref_k,e_k)\in [-\beta,\beta]^2$. 
If $[v_{\min}(\gamma),v_{\max}(\gamma)] \subseteq V_\gamma$, then
the claim follows immediately from the previous observation. If,
alternatively, 
$V_\gamma \subset [v_{\min}(\gamma),v_{\max}(\gamma)] $, then the
claim follows from the fact that 
\[ 
\begin{array}{ll}
v_{k+1} \ge v_k \quad &\text{for all} \quad 
v_k \le \min_j\{\nu_j,-\beta(\gamma+1)/\gamma\}
\\
%\w_{i,k+1} \ge \w_{i,k} \quad &\text{for all} \quad
%\w_{i,k} \le 0 \\
v_{k+1} \le v_k \quad &\text{for all} \quad 
v_k \ge \max_j\{\nu_j,\beta(\gamma+1)/\gamma\} 
\\
%\w_{i,k+1} \le \w_{i,k} \quad &\text{for all} \quad
%\w_{i,k} \ge 1 ,
\end{array}
\]
whenever $\w_k\in[0,1]^\ngat$ and $(\vref_k,e_k)\in 
[-\beta,\beta]^2$. 

Since $[v_{\min}(\gamma),v_{\max}(\gamma)]$ is uniformly
bounded in $\gamma>0$, Proposition \ref{prop:output_contractive} 
together with Lemma \ref{lem:discretized_contracting} imply 
the existence of a $\gamma>0$ and a $t_s \le t_s^*(\gamma)$ 
such that \eqref{eq:DT_CL_model} is exponentially contracting in 
$[v_{\min}(\gamma),v_{\max}(\gamma)]\times [0,1]^\ngat$, 
uniformly in $[-\beta,\beta]^\ngat$. This concludes the proof.

\subsection{Proof of Lemma \ref{lem:S3}}
\label{proof:S3} 

Consider two different solutions of 
\eqref{eq:true_feedback}-\eqref{eq:true_system} (which, combined, 
can be written as \eqref{eq:DT_CL_model}). The first is given by 
\begin{equation}
	\label{eq:sol1}
	\begin{split}
		(v_k,\w_k^\top)^\top &= 
		\phi_{k,0}((\vref,e)^\top,(v_0,\w_0^\top)^\top) \\
		%(v_0,\w_0^\top)^\top &= 0
	\end{split} 
\end{equation}
for $k\ge 0$, and the second is given by
\begin{equation}
	\label{eq:sol2}
	\begin{split}
(\bar{v}_{k,s+1},\bar{\w}_{k,s+1}^\top)^\top &= 
\phi_{k,s+1}((\vref,e)^\top,(\bar{v}_{s+1},
							\bar{\w}_{s+1}^\top)^\top) \\
	(\bar{v}_{s+1},\bar{\w}_{s+1}^\top)^\top &= 0
	\end{split} 
\end{equation}
for $k \ge s + 1$. 

We will use the solutions above to construct the random
variables $\bar{y}_{k,s}$ involved in Condition
\ref{cond:S3}. First, for each $s\in\setint_+$, we set 
$\bar{y}_{s,s} = 0$.
From \eqref{eq:true_system_y}, we compute the sequence 
$y_k$ using \eqref{eq:sol1}, for $k\ge 0$, and the sequence
$\bar{y}_{k,s+1}$ using \eqref{eq:sol2}, for $k \ge s+1$.
%Thus
We have that $\bar{y}_{k,s}$ is independent of 
$e_{[0,s]}$, since $e_{[s+1,k]}$ is independent of
$e_{[0,s]}$; furthermore, $\vref_k$ is deterministic;
thus the independence required in Condition
\ref{cond:S3} is satisfied.
We now need to verify \eqref{eq:condition_S3_A}
for $k \ge s$. For $k=s$, we have
\begin{equation}
	\label{eq:bound_S3_s0} 
	\begin{split}
	|y_s - \bar{y}_{s,s}| &= |y_s| = 
	\tfrac{1}{c} \big| \Iint(v_s,\w_s) - \gamma(r_s-v_s)
		-e_s \big|\\
		&\le \tfrac{1}{c}\big(|\Iint(v_s,\w_s)+\gamma v_s|
			+ (\gamma + 1) \inpbound  \big) \\
		&\le C_1 
	\end{split}
\end{equation}
for some $C_1>0$ and for each $s\in\setint_+$. To ensure
this bound, we have used (from Assumptions
\ref{asp:discrete_input}-\ref{asp:contracting_true_system}) the
fact that $(\vref,e) \in \inpclass^2$, and the fact that 
$\Iint(v,\w)+\gamma v$ is a continuous function on the set 
$[v_{\min},v_{\max}]	\times[0,1]^{\ngat}$.
%$V_{\inpbound,\gamma}\times[0,1]^{n_\w}$
%is closed and bounded.

Now, we make use of Assumption
\ref{asp:contracting_true_system}. Let $\alpha_c<1$ be the
contraction rate of the closed-loop dynamics 
\eqref{eq:true_feedback}-\eqref{eq:true_system}.
Since $(\vref,e) \in \inpclass^2$, we can apply 
Lemma~\ref{lem:contraction_expo_stab} (with the time origin
shifted to $s+1$) to see that there is a $C_2 > 0$ such that
\begin{equation}
	\label{eq:bound_S3}
	\resizebox{\linewidth}{!}{$ 
	\begin{aligned}
	\left|y_k - \bar{y}_{k,s+1}\right|  
		&\le C_2 \alpha_c^{k-(s+1)} \,
		%\|\phi_{s+1,0}(\gain \vref+e,0)-0\|
		\|(v_{s+1},\w_{s+1}^\top) - 
			(\bar{v}_{s+1},\bar{\w}_{s+1}^\top)\|\\
			&= \alpha_c^{-1} C_2 \, \alpha_c^{k-s} \,
			\|(v_{s+1},\w_{s+1}^\top)\| \\
			&\le \alpha_c^{-1} C_2 \, C_3 \,\alpha_c^{k-s}
	\end{aligned}
	$}
\end{equation}
for each $s \in \setint_+$ and $k\ge s+1$, where
the constant $C_3>0$ comes from the boundedness of 
$[v_{\min},v_{\max}]	\times[0,1]^{\ngat}$. Taking 
$E[\;\cdot\,^4]$ on both sides of \eqref{eq:bound_S3_s0} and
\eqref{eq:bound_S3}, we verify \eqref{eq:condition_S3_A}
with $C=\max\{C_1^4,(\alpha_c^{-1}C_2C_3)^4\}$ and 
$\alpha = \alpha_c^4$. 

The random variables $\bar{u}_{k,s}$ of Condition
\ref{cond:S3} can be constructed in a completely analogous
way, and thus we omit this part of the proof.

\section{Hodgkin-Huxley kinetic functions}
\label{app:kinetics}

To define the ion channel kinetics of the Hodgkin-Huxley
model, we first set
\begin{equation*}
\resizebox{\linewidth}{!}{$
{\renewcommand{\arraystretch}{2}
\begin{array}{ll}
\alpha_{m,1}(v) = 0.1\displaystyle\frac{-40-v}{\,\text{exp}\left( \frac{-40-v}{10}\right)-1} & 
\beta_{m,1}(v) = 4\,\text{exp}\left(\frac{-v-65}{18} \right) \\
\alpha_{h,1}(v)= 0.07\,\text{exp}\left(\frac{-v-65}{20} \right) & 
\beta_{h,1}(v) = \displaystyle\frac{1}{\,\text{exp}\left( \frac{-35-v}{10}\right)+1} \\ 
\alpha_{m,2}(v) = 
0.01\displaystyle\frac{-55-v}{\,\text{exp}\left( \frac{-55-v}{10}\right)-1} & 
\beta_{m,2}(v) = 0.125\,\text{exp}\left(\frac{-v-65}{80} \right)
\end{array} 
}
$}
\end{equation*}

Then, the functions $\tau_{m,j}$ and $m_{\infty,j}$,
$j=1,2$, are given by
\begin{equation}
\label{eq:tau_from_alpha} 
	\begin{split}
	\tau_{m,j}(v) &= \frac{1}{\alpha_{m,j}(v)+
	\beta_{m,j}(v)} \\
	m_{\infty,j}(v) &= 
	\frac{\alpha_{m,j}(v)}{\alpha_{m,j}(v)+
	\beta_{m,j}(v)}
	\end{split}
\end{equation}
The same relationships are used to define 
$\tau_{h,1}$ and $h_{\infty,1}$.

\section{Connor-Stevens kinetic functions}
\label{app:cs_kinetics}

The ion channel kinetics of the CS models are given by
the relationships \eqref{eq:tau_from_alpha}, with
\begin{equation*}
\resizebox{\linewidth}{!}{$
{\renewcommand{\arraystretch}{2}
\begin{array}{ll}
\alpha_{m,1}(v)=0.38\displaystyle\frac{-29.7-v}{\,\text{exp}\left( \frac{-29.7-v}{10}\right)-1} & 
\beta_{m,1}(v)= 15.2\,\text{exp}\left(\frac{-54.7-v}{18} \right) \\ 
\alpha_{h,1}(v)= 0.266\,\text{exp}\left(\frac{-v-48}{20} \right) & 
\beta_{h,1}(v)=3.8\displaystyle\frac{1}{\,\text{exp}\left( \frac{-18-v}{10}\right)+1} \\
\alpha_{m,2}(v)=0.019\displaystyle\frac{-45.7-v}{\,\text{exp}\left( \frac{-45.7-v}{10}\right)-1} 
& \beta_{m,2}(v)=0.2375\,\text{exp}\left(\frac{-55.7-v}{80} \right) \\
\end{array} 
}
$}
\end{equation*}

The remaining functions are given by
\begin{equation*}
\begin{split}
\tau_{m,3}(v) &= 
0.3632+\displaystyle\frac{1.158}{1+\text{exp}\left(\frac{v+55.96}{20.12}\right)}
\\
m_{\infty,3}(v) &= 
\left(0.0761\displaystyle\frac{\text{exp}\left(\frac{v+94.22}{31.84}\right)}
{1+\text{exp}\left(\frac{v+1.17}{28.93}\right)}\right)^{\frac{1}{3}}
 \\
\tau_{h,3}(v) &= 
1.24+\displaystyle\frac{2.678}{1+\text{exp}\left(\frac{v+50}{16.027}\right)}
\\ h_{\infty,3}(v) &= 
 \displaystyle\frac{1}{\left(1+\text{exp}\left(\frac{v+53.3}{14.54}\right)\right)^4}
\end{split}
\end{equation*}
and 
\begin{equation*}
\begin{split}
\tau_{m,4}(v) &= 2.35
\\
m_{\infty,4}(v) &= 
\frac{1}{1+\exp(-0.15(v+50))}
\end{split}
\end{equation*}

\bibliographystyle{plain}
\bibliography{bibliography}

\end{document}